\documentclass[12pt,a4paper]{article}
\usepackage[latin1]{inputenc}
\usepackage{amsmath}
\usepackage{amsthm}
\usepackage{amsfonts}
\usepackage{amssymb}
\usepackage{hyperref}
\usepackage{cleveref}
\usepackage{graphicx}
\usepackage{nicefrac}
\usepackage{fullpage}
\usepackage{bm}
\usepackage{xcolor}
\usepackage[linesnumbered,ruled]{algorithm2e}
\usepackage{tikz}
\usepackage{mathtools}
\usepackage[all]{xy}
\usepackage{authblk}

\newtheorem{thm}{Theorem}[section]
\newtheorem{prop}[thm]{Proposition}
\newtheorem{lemma}[thm]{Lemma}
\newtheorem{remark}[thm]{Remark}
\newtheorem{cor}[thm]{Corollary}

\newtheorem{defi}[thm]{Definition}

\newtheorem{observation}[thm]{Observation}
\newtheorem*{theoremx}{Theorem}

\newcommand*{\rom}[1]{\expandafter\@slowromancap\romannumeral #1@}

\newcommand{\eff}{\mathbb{F}}
\newcommand{\Fq}{\eff_q}

\newcommand{\Fp}{\eff_p}
\newcommand{\Fqm}[1]{\eff_{q^{#1}}}
\newcommand{\bbN}{\mathbb{N}}
\newcommand{\bbNp}{\bbN^*}
\newcommand{\bbZ}{\mathbb{Z}}
\newcommand{\cC}{\mathcal{C}}

\newcommand{\rs}{\mathsf {RS}}
\newcommand{\trs}{\widetilde{\rs}}
\newcommand{\bs}{\boldsymbol}
\newcommand{\ga}{\mathrm{GA}}
\newcommand{\clong}{C_{\mathrm{PPC}}}
\newcommand{\uclong}{\clong^{\mathrm{unfolded}}}
\newcommand{\ol}{o^{\ell}}
\newcommand{\qf}{\hat{Q}_f}
\newcommand{\veps}{\varepsilon}

\makeatletter

\newcommand{\Rmnum}[1]{\expandafter\@slowromancap\romannumeral #1@}
\makeatother

%>>>>>>>>>>>>>>>>>>>>>>>>>>>>>>>>>>>>>>>>>>>>>>>>>>>>>>>>>>

%>>>>>>>>>>>>>>>>>>>>>>>>>>>>>>>>>>>>>>>>>>>>>>>>>>>>>>>>>>

%%%%%%%%%%%%%%%%%
% MathOperators %
%%%%%%%%%%%%%%%%%

\DeclareMathOperator{\poly}{poly}

\DeclareMathOperator{\ord}{ord}
\DeclareMathOperator{\ev}{ev}
\DeclareMathOperator{\Gal}{Gal}
\DeclareMathOperator{\lcm}{lcm}

%%%%%%%%%
% TITLE %
%%%%%%%%%
\title{Explicit Subcodes of Reed--Solomon Codes that Efficiently
Achieve List Decoding Capacity} 

\author[1]{Amit Berman}
\author[1]{Yaron Shany}
\author[2,1]{Itzhak Tamo}
\affil[1]{Samsung  Semiconductor Israel R\&D Center, 146 Derech
Menachem Begin St., Tel Aviv 6492103, Israel. Emails: \{amit.berman,
yaron.shany\}@samsung.com} 
\affil[2]{Department of Electrical Engineering-Systems, Tel Aviv
University, Tel Aviv 6997801, Israel. Email: zactamo@gmail.com}

%%%%%%%%%
% BEGIN %
%%%%%%%%%
\begin{document}

\maketitle

%%%%%%%%%%%%
% ABSTRACT %
%%%%%%%%%%%%
\begin{abstract}
In this paper, we introduce a novel explicit family of subcodes of Reed-Solomon (RS) codes that efficiently achieve list decoding capacity with a constant output list size. Our approach builds upon the idea of large linear subcodes of RS codes evaluated on a subfield, similar to the method employed by Guruswami and Xing (STOC 2013). However, our approach diverges by leveraging the idea of {\it permuted product codes}, thereby simplifying the construction by avoiding the need of {\it subspace designs}.

Specifically, the codes are constructed by initially forming the tensor product of two RS codes with carefully selected evaluation sets, followed by specific cyclic shifts to the codeword rows. This process results in each codeword column being treated as an individual coordinate, reminiscent of prior capacity-achieving codes, such as folded RS codes and univariate multiplicity codes. This construction is easily shown to be a subcode of an interleaved RS code, equivalently, an RS code evaluated on a subfield.

Alternatively, the codes can be constructed by the evaluation of bivariate polynomials over orbits generated by \emph{two} affine transformations with coprime orders, extending the earlier use of a single affine transformation in folded RS codes and the recent affine folded RS codes introduced by Bhandari {\it et al.} (IEEE T-IT, Feb.~2024). While our codes require large, yet constant characteristic, the two affine transformations facilitate achieving code length equal to the field size, without the restriction of the field being prime, contrasting with univariate multiplicity codes.

\end{abstract}

%%%%%%%%%%%%%%%%%%%%%%
\section{Introduction}
%%%%%%%%%%%%%%%%%%%%%%
Error-correcting codes are used for reliably transmitting data over
noisy communication channels. To achieve this goal, a code $C\subseteq
\Sigma^n$ (for some {\it alphabet} $\Sigma$ and {\it length} $n$) is
typically a proper subset of $\Sigma^n$, consisting only of
$|\Sigma|^k$ {\it codewords} for some $k<n$. We then say that the code
has {\it rate} $R:=k/n$, and one of the fundamental goals of coding
theory is to maximize the minimum (Hamming) distance between codewords
for a given rate $R$. 

For a code of minimum distance $d$ and {\it
normalized distance} $\delta:=d/n$, the transmitted codeword is
completely determined from the noisy channel output if the latter has
errors in less than  a fraction of  $\delta/2$ 
coordinates, while for a larger fraction of errors, the transmitted
codeword is in general not uniquely determined. Since the {\it
Singleton bound} implies that $\delta\leq 1-R$, the
maximum possible guaranteed {\it unique decoding radius} of a code of
rate $R$ is therefore $(1-R)/2$, and this is achieved by Reed--Solomon
(RS) codes. 

As originally suggested by Elias \cite{elias} and Wozencraft
\cite{wozencraft1958listdecoding}, to go beyond the 
unique decoding radius, the decoder must be allowed 
to output a  \emph{list} of potential transmitted codewords. We
say that a code $C$ is {\it $(\rho,L)$-list decodable} if for any
received word $w$, there are at most $L$ codewords $c\in C$ such that
$c$ disagrees with $w$ in at most a fraction $\rho$ of the
coordinates. 

In a breakthrough work, Sudan \cite{sudan1997decoding} presented the
first polynomial-complexity algorithm for list-decoding RS codes beyond
$(1-R)/2$ for low $R$. This was later considerably improved by
Guruswami and Sudan \cite{gurus}, who presented an efficient 
algorithm for decoding RS codes up to the {\it Johnson radius}
$1-\sqrt{R}$ with a polynomial list size. However, it is well-known
that the maximum list-decoding radius with a guaranteed constant
list size is much higher than the Johnson radius; a random coding
argument (see, e.g., \cite{guruswami2019essential}) shows that a
$(\rho,L)$-list decodable code with constant $L$ may be achieved for any 
$\rho<1-R$ (for a large enough alphabet), while it is clear that
for $\rho> 1-R$, the list 
size $L$ must be exponential. Hence, $\rho=1-R$ is called the {\it list
decoding capacity}.

Guruswami and Rudra \cite{GR08} presented the construction of {\it folded RS codes} (FRS), the first
explicit family of codes that achieve list decoding capacity. FRS codes are closely related to
RS codes; they are obtained by ``folding'' cyclic RS codewords into
the shape of matrices, where each column is considered as a single
coordinate. In fact, not only that FRS codes achieve list
decoding capacity, they achieve capacity {\it efficiently}, i.e.,  there is a
deterministic algorithm that can decode up to a fraction of
$1-R-\veps$ errors for any $\veps>0$, with 
list size and complexity polynomial in the code length. In a later work, Guruswami and Wang
\cite{guruswami-wang} presented a simple linear-algebraic decoding
algorithm for FRS codes, and proved that the output list is
contained in a subspace of dimension $O(1/\veps)$.  Hence, despite the
considerable simplification of the decoding algorithm, the list size
remained polynomial in the code length. 
To achieve a constant output list size, Guruswami and Wang proposed in \cite{guruswami-wang}  the use of pre-encoding with a combinatorial stucture they called {\it subspace evasive sets}. They also constructed these using a  probabilistic argument. Subsequently, Dvir and Lovett
\cite{dvir2012subspace}, gave an explicit construction of subspace
evasive sets. 

Recently, Kopparty {\it et
al.} \cite{kopparty-improved-list-size} revealed that the list size of
FRS codes \emph{themselves}, as well as that of other capacity
achieving codes, such as {\it univariate multiplicity codes}
(originally introduced by Rosenbloom and Tsfasman
\cite{rosenbloom1997codes}; see also
\cite{guruswami-wang} and \cite{kopparty-improved-list-size}) is in
fact constant, even without  using a pre-encoding step. Furthermore,
it was shown in \cite{kopparty-improved-list-size} 
that the constant list size can be achieved efficiently with a
randomized algorithm. Lastly, the list-size bounds of
\cite{kopparty-improved-list-size} were further tightened in
\cite{tamo2023tighter}.  

We comment that while folded RS codes univariate and multiplicity
codes have a polynomial alphabet size, there are several papers,
culminating in \cite{GRZ22}, that use constructions based on
algebraic-geometry (AG)  codes to obtain capacity-achieving codes with
both  constant list size \emph{and} constant alphabet size; see
\cite{GRZ22} and the references therein. Since this is outside the  
main scope of the current paper, we will not further elaborate on this
subject; the interested reader is referred to the introduction of
\cite{GRZ22} for a comprehensive account. 

Although folded RS codes are closely related to RS codes, they are in
general not RS codes themselves, neither (informally) large 
subcodes of RS codes. Moreover, although it is known that 
for an appropriate choice of the {\it evaluation set},
RS codes achieve list decoding capacity \emph{combinatorially}
\cite{BGM23}, \cite{GZ23}, \cite{alrabiah2023randomly} (with
exponential, quadratic, and linear finite-field size, resp.), the
evaluation sets in these works are not explicit, and there is no known
efficient algorithm for decoding these RS codes up to capacity. It is
therefore natural to ask: 

\begin{description}
\item[Q1] What is an explicit evaluation set, if exists, for
which an RS code can be efficiently list decoded up to 
capacity?  
\end{description}
As Q1 appears to be a hard question, it is also of interest to
consider the following simpler, yet non-trivial, question:
\begin{description}
\item[Q2] What is an explicit evaluation set for
which an (informally) large linear subcode of an RS code can be
efficiently decoded up to the list-decoding capacity?  
\end{description}

An answer given in \cite{GX13} to Q2 is as follows. Start with
an evaluation set that is a subfield of the finite field
over which the RS code is defined. While the resulting RS code itself
has an exponential list size (albeit with a smaller exponent than the
trivial one), it turns out that the list of coefficients of potential
information polynomials is a so-called ``periodic
subspace''. This fact is then used in \cite{GX13} to show that
pre-encoding each column of the information matrix with a different
subspace from a {\it subspace design} (also defined in
\cite{GX13}\footnote{Besides defining subspace designs, Guruswami and
Xing \cite{GX13} also gave a randomized construction of subspace design. An explicit
construction was later found by Guruswami and Kopparty
\cite{GK16}.}) 
results in an efficient capacity achieving code. Also, since the above
subspaces have a small (informally) co-dimension, this indeed results
in a large linear subcode of an RS code.

%-------------------------------
\subsection{Results and methods}
%-------------------------------
Our main contribution is a new and simple answer to Q2. In detail,
similarly to \cite{GX13}, we construct a large subcode of an
RS code evaluated on a subfield, and that efficiently achieves
list decoding capacity. However, our construction does not require the
rather involved concepts of periodic subspaces and subspace
designs. Instead, it is based on the well-known and simple construction of
(tensor) product codes. 

Informally, we start with the product of two RS codes over
$\Fq$,\footnote{$\Fq$ is the finite 
field of $q$ elements, whose characteristic is large enough. See Section \ref{sec:inst}.} 
and  coprime lengths $m$ and $n$, where the row code of length $n$ is cyclic. 
%a column  code of
%length $m$, for some $m$ to be  defined below, and a row code
%of length $n$, where the row code is cyclic, $n$ can be %arbitrarily
%larger than $m$, and $\gcd(m,n)=1$ (see Section \ref{sec:const}
%and Section \ref{sec:inst} for more details).
We then apply 
cyclic shifts to the rows of the resulting  codewords, where row $i$ is
cyclically shifted to the left by $a\cdot i$ coordinates, with
$a\equiv m^{-1}\mod(n)$. We refer to the resulting code as a {\it
permuted product code}.  

The permuted product code is obviously a subcode of the interleaved RS
code defined by requiring only that all rows are in the cyclic row
code. Moreover, it is 
well-known that the latter interleaved code is
an RS code over $\Fqm{m}$ with an evaluation set in $\Fq$, where each
entry of a codeword is replaced by the column vector of coefficients
in its  representation according to some basis of $\Fqm{m}/\Fq$. Hence showing that, indeed, the permuted product code is a subcode of an RS code. Our
main result is the following theorem.

%%%%%%%%%%%
% THEOREM %
%%%%%%%%%%%
\begin{theoremx}[Informal, see Theorem \ref{thm:mainx} below]
For $R\in(0,1)$, small enough $\veps>0$, and all powers $q$ of a prime
$p=O(1/\veps^3)$, there are instances of the permuted product code
over $\Fq$ with alphabet size $q^{O(1/\veps^3)}$,  rate $R$
and block length $q-1$, that are efficiently list-decodable from error
fraction $1-R-\veps$, with an output list of size
$(1/\veps)^{O(1/\veps^2)}$.  
\end{theoremx}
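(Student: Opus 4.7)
The plan is to combine three ingredients: a reduction that realises the permuted product code as a subcode of an RS code over $\Fqm{m}$ with evaluation set inside the subfield $\Fq$, a Guruswami--Wang-style linear-algebraic decoder for that RS code, and an intersection argument that exploits the product structure in place of the subspace designs used by Guruswami--Xing.

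First I would fix parameters: set $n = q-1$ and $m = \Theta(1/\veps^3)$, so that the alphabet size $q^m$ matches the bound; the coprimality $\gcd(m,n)=1$ is arranged by taking the characteristic $p = O(1/\veps^3)$ coprime to $q-1$. Let $C_1$ be the cyclic $[n,k_1]$ RS code over $\Fq$ and $C_2$ the $[m,k_2]$ RS code, with $k_1 k_2/(mn) = R$ and a small slack in the rates to accommodate radius $1 - R - \veps$. I would then show that after cyclically shifting row $i$ by $a\cdot i$ with $a \equiv m^{-1}\pmod n$, reading each length-$m$ column through a fixed basis of $\Fqm{m}/\Fq$ identifies the permuted product code $C$ with a subcode of the $[n,k_1]$ RS code $\mathcal{C}'$ over $\Fqm{m}$ whose evaluation set is $\Fq^\ast$; this is essentially the well-known fact that interleaved cyclic RS codes are themselves RS codes over an extension field, with the specific shift pattern chosen to align the column indexing.

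Next I would run the linear-algebraic decoder on $\mathcal{C}'$. Picking $s = \Theta(1/\veps)$, dimension-count a nonzero interpolation polynomial $Q(X,Y_1,\dots,Y_s) = A_0(X) + \sum_{j=1}^{s} A_j(X)\,Y_j$ of appropriate weighted degree that vanishes on the received word. Because the evaluation points lie in $\Fq$ and a single $\gamma$ generates $\Fq^\ast$, any $f\in \mathcal{C}'$ with agreement above $R+\veps$ must satisfy $Q\bigl(x, f(x), f(\gamma x),\dots, f(\gamma^{s-1}x)\bigr) = 0$ for every evaluation point $x$. Expanding each $f(\gamma^j x)$ through the chosen basis of $\Fqm{m}/\Fq$ turns this identity into an $\Fq$-linear system in the $\Fqm{m}$-coefficients of $f$, so the candidate messages lie inside an affine $\Fq$-subspace $V$ of dimension $O(sm) = O(1/\veps^4)$ in $\Fqm{m}^{k_1}$.

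The heart of the argument --- and the step I expect to be the main obstacle --- is controlling $|V\cap C|$ without invoking subspace designs. The crucial observation is that the column code $C_2$ is MDS, so every column of the message matrix (after un-shifting) is determined by any $k_2$ of its $m$ entries; in effect, $C_2$ plays the role that an explicit subspace design plays in Guruswami--Xing. Combining these column-MDS constraints with the affine constraints from $V$, and adapting the counting techniques behind the improved list-size analyses of Kopparty--Ron-Zewi--Saraf--Wootters and Tamo for folded RS codes, I would bound $|V \cap C|$ by $(1/\veps)^{O(1/\veps^2)}$. Efficiency is then automatic: the interpolation step is a polynomial-size linear system over $\Fqm{m}$, and the final enumeration takes place over a space of dimension $O(1/\veps^2)$ over $\Fq$.
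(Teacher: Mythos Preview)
Your plan diverges from the paper's at the decoding step, and the divergence is where the gap lies. The paper does \emph{not} run the Guruswami--Wang decoder on the ambient RS code $\mathcal{C}'$ over $\Fqm{m}$ and then intersect with the product structure. Instead, it interpolates a polynomial $Q=\sum_{i=0}^{w-1}p_i(x,y)z_i$ whose coefficients $p_i$ are \emph{bivariate}, with $\deg_x(p_i)\le m-s-w$ and $\deg_y(p_i)\le D-t$. The $x$-degree bound is exactly where the column-code constraint $s<m$ enters: it is baked into the interpolation rather than imposed afterwards. Substituting $z_i\mapsto f(\ell_1^i(x),\ell_2^i(y))$ and evaluating at roots $\zeta_1,\zeta_2$ of $x^q-\ell_1(x)$ and $y^q-\ell_2(y)$ turns the equation $\qf=0$ into a single \emph{linearized} equation $\sum_i p_i(\zeta_1,\zeta_2)\,f(\zeta_1,\zeta_2)^{q^i}=0$ over $K=\Fq(\zeta_1,\zeta_2)$. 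This directly confines the messages to an $\Fq$-subspace of dimension at most $w-1=O(1/\veps^2)$, after which the Tamo lemma gives the stated list size.

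Your approach is essentially the Guruswami--Xing framework with the column RS code $C_2$ substituted for a subspace design, and the step ``combine the column-MDS constraints with $V$ and adapt KRSW/Tamo to get $(1/\veps)^{O(1/\veps^2)}$'' is not justified. Running GW on $\mathcal{C}'$ with $s=\Theta(1/\veps)$ variables yields, as in GX13, a \emph{periodic} subspace: once the first $j-1$ coefficient columns are fixed, the $j$-th column is confined to an affine $(s-1)$-dimensional $\Fq$-subspace of $\Fq^m$. Intersecting each such slice with $C_2$ can only guarantee dimension $\le s-1$ per column (the MDS property does not force the periodic slice to meet $C_2$ transversally), so summing over the $k_1\approx Rn$ columns gives a bound of order $(s-1)k_1$, which is huge; feeding that into the KRSW/Tamo lemma gives nothing useful. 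Subspace designs exist precisely to force the cumulative intersection dimension down to $O(s)$ rather than $O(sk_1)$, and you have not shown that a single fixed RS column code has this evasiveness against an arbitrary periodic subspace. Without such an argument, the claimed bound $(1/\veps)^{O(1/\veps^2)}$ does not follow, and indeed the whole point of the paper is that the bivariate interpolation sidesteps this intersection problem entirely.
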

Alternatively, the permuted product code can be constructed by
evaluating bivariate 
polynomials on orbits of two elements under the action of \emph{two}
affine transformations of coprime orders. This extends the previous
usage of a single affine transformation in folded RS codes, additive
folded RS codes, and, more generally, in the recent affine folded RS
codes of Bhandari {\it et al.} \cite{BHKS23}. While folded RS
codes cannot reach a length that is as large as the size of the
underlying finite field $\Fq$ (as the code length is shorter by a
factor of the folding length), the current construction can reach a
length of $q-1$. We note that this is also 
possible with univariate multiplicity codes. However, the usage
of two affine transformations enables 
reaching a length that is as large as the finite-field size for all
rates (albeit with high, yet constant, characteristic), with a finite
field that needs not be prime.\footnote{Note that univariate
multiplicity codes have an exponential list size if the dimension is
sufficiently larger than the characteristic.}

Table \ref{table:comp} below compares the parameters of the permuted
product codes of this work with those of some other polynomial-based
capacity achieving codes. We note that while the parameters of the
construction of \cite{GX13} are better, the main contribution of the
current construction is not in its parameters, but rather that it
provides a \emph{simple} answer to Q2, constructing capacity achieving
subcodes of RS codes, using basic coding-theoretic concepts.

\begin{table}[h]
\centering
{\small
\begin{tabular}{|c|c|c|c|}
\hline
Code & Field size, $q$ & \begin{tabular}{c} %log alphabet\\ size, 
$\log_q(|\Sigma|)$\end{tabular} & \begin{tabular}{c}List size bound
%(using
%\\ \cite{tamo2023tighter})
\end{tabular} \\ 
\hline\hline
Folded RS codes & $q=O(n/ \veps^2)$ &
$O(1/\veps^2)$& $(1/\veps)^{4/\veps}$ \\
\hline
\begin{tabular}{c} Univariate\\ multiplicity codes \\
of dimension $d+1$ \end{tabular}& \begin{tabular}{c}$q=p^r$ ($r\in\bbNp$)\\ $p\geq 
\Omega(1/\veps)$ prime\\ $q\geq n$ \end{tabular} &
$O(1/\veps^2)$ & $(1/\veps)^{\frac{4}{\veps}\Big(1+\frac{d}{p}\Big)}$
\\
\hline
\begin{tabular}{c}\cite{GX13}\\ Interleaved RS\\ codes
+ \\ subspace~designs \end{tabular} & \begin{tabular}{c}$q\geq \Omega(1/\veps^2)$ \\
                 $q\geq n$ \end{tabular} & $O(1/\veps^2)$ &
$(1/\veps)^{O(1/\veps^2)}$ \\
\hline
\begin{tabular}{c} Permuted\\ product codes\\ (this work)
\end{tabular} &
\begin{tabular}{c}$q=p^r$ ($r\in\bbNp$)\\ $p\geq 
\Omega(1/\veps^3)$ prime\\ $q>n$ \end{tabular} &
$O(1/\veps^3)$  & 
$(1/\veps)^{O(1/\veps^2)}$ \\  
\hline
\end{tabular}
}
\caption{Some capacity achieving list-decodable codes of rate $R$, length $n$, and  alphabet $\Sigma$ that are list decodable from $1-R-\veps$ fraction of errors. The list size bounds are obtained via \cite{tamo2023tighter}. Specifically, the list size bound for FRS
codes is from \cite[Corollary 3.6]{tamo2023tighter}, the bound for
univariate multiplicity codes is from \cite[Theorem
3.8]{tamo2023tighter}, while the other two bounds
on the list size are obtained by using \cite[Lemma
3.1]{tamo2023tighter} with \cite[Theorem 23]{GK16} and
Theorem \ref{thm:mainx} below. 
Note that by the list size bound for univariate multiplicity codes,  the rate is positive and the list size bound is constant only if the code length is $O(p).$  
%note the dimension divided by the characteristic in the exponent
%of the list size; to remain with a constant list at all rates, $q$ must be prime. 
}  
\label{table:comp}
\end{table}

To summarize, our main contributions are as follows:

\begin{itemize}

\item We construct a new and simple large subcode of an RS code that
efficiently achieves list-decoding capacity.

\item We show that up to some cyclic shifts, the product of RS codes
can be used to achieve efficient capacity-achieving codes (where each
column of a codeword is regarded as a coordinate). This provides a new
method for constructing capacity achieving codes. 

\item We show how \emph{two} affine transformations can be used to
define capacity-achieving codes whose length is as large as the
underlying finite field.

\end{itemize}

%%%%%%%%%%%%%%%%%%%%%%%%%
\subsection{Organization}
%%%%%%%%%%%%%%%%%%%%%%%%%
In Section \ref{sec:prelim} we provide some required definitions 
 and notations. In Section \ref{sec:const} we define the permuted product codes and prove that they are
indeed subcodes of an RS code with an appropriate evaluation
set. Then, a linear-algebraic list decoding algorithm up to capacity
is presented in Section \ref{sec:listdecode}. Finally, Section
\ref{sec:conc} includes some open questions for further research. The
paper is supplemented by an appendix, in which we study the properties
of the ``unfolded'' code as a cyclic code.

%%%%%%%%%%%%%%%%%%%%%%%
\section{Preliminaries}\label{sec:prelim}
%%%%%%%%%%%%%%%%%%%%%%%
This section includes some definitions and notation that will be used
throughout the paper.

We write $\Fq$ for the finite field of $q$ elements,
where $q$ is a prime power. Throughout, we fix a prime $p$ and 
$q$ a power of $p$.  
Unless otherwise noted, all vectors are row vectors. Also,
$(\cdot)^T$ stands for matrix transposition.

%--------------------------------------------------------
\subsection{Reed--Solomon codes and their tensor products}
%--------------------------------------------------------
For integers $0\leq k\leq n\leq q$ and a set
$A=\{a_0,\ldots,a_{n-1}\}\subseteq \Fq$, the {\it Reed--Solomon (RS)
code} $\rs_{\Fq}(k,A)$ with evaluation set $A$ and dimension $k$, is
defined as  
$$
\rs_{\Fq}(k,A):=\big\{\big(f(a_0),\ldots,f(a_{n-1})\big)\big|f\in\Fq[x],
\deg(f)<k \big\}\subseteq \Fq^n.
$$
For simplicity, when the underlying finite field is clear from the
context, we will write simply $\rs(k,A)$ for $\rs_{\Fq}(k,A)$.

Next, it will be useful to recall the definition of the tensor product
of two RS codes. The {\it tensor product} $C_2\otimes C_1$ of
linear codes $C_1,C_2\subseteq \Fq^n$ of dimensions $k_1,k_2$ (resp.)
is the space of matrices whose columns are in $C_2$ and whose rows are
in $C_1$. It can be verified that $C_2\otimes C_1$ is indeed a tensor
product: it is generated as an $\Fq$-space by the outer products
$\bs{c}_2 \otimes \bs{c}_1:=\bs{c}_2^T\bs{c}_1$ for $\bs{c}_i\in C_i$,
$i=1,2$. In particular, it can be verified that if 
$\{\bs{b}'_1,\ldots,\bs{b}'_{k_2}\}$,
$\{\bs{b}_1,\ldots,\bs{b}_{k_1}\}$ are bases for $C_2$ and $C_1$
(resp.), then $\{\bs{b}'_i\otimes\bs{b}_j\}_{i,j}$ is a basis for
$C_2\otimes C_1$. It follows that $\dim(C_2\otimes C_1)=k_1k_2$, and
it is easily verified that if the minimum distances of $C_1$ and $C_2$
are $d_1,d_2$ (resp.), then the minimum distance of $C_2\otimes C_1$
is $d_1d_2$.
It also follows that for two sets  $A=\{a_i\},B=\{b_j\} \subseteq
\Fq$, and non-negative integers $s\leq |A|$, $t\leq |B|$,
$$
\rs(s,A)\otimes \rs(t,B)=\Big\{\{f(a_i,b_j) \}_{i\in\{0,\ldots,|A|-1\}
\atop j\in\{0,\ldots, |B|-1\}}\in
\Fq^{|A|\times |B|}\Big|f\in \Fq[x,y], \deg_x(f)<s,\deg_y(f)<t\Big\}. 
$$

%--------------------------------------
%\subsection{Folded Reed--Solomon codes}
%--------------------------------------

%----------------------------
\subsection{The affine group}
%----------------------------
The {\it affine group} $\ga(q)$ is the group whose underlying set is
$\{ax+b|(a,b)\in\Fq^*\times \Fq\}\subset \Fq[x]$, while the group
operation is polynomial composition: for $\ell_i:=a_ix+b_i\in \ga(q)$
($i=1,2$), $\ell_2\circ\ell_1:= \ell_2(\ell_1(x))= 
a_2a_1x+a_2b_1+b_2$. It can be verified that this is indeed a group,
with identity element $x$, and inverse
$(ax+b)^{-1}=a^{-1}x-a^{-1}b$. For $\ell(x)\in \ga(q)$ and
$i\in \bbNp$, we let
$\ell^i:=\underbrace{\ell\circ\cdots\circ\ell}_{i}$, and
$\ell^0:=x$. The {\it order} 
of $\ell$, $\ord(\ell)$, is defined as usual as the smallest $i\in
\bbNp$ such that $\ell^i=x$. 

If $\ell(x)=ax+b$ with $a\neq 1$, then
$\ell^i(x)=a^ix+b\frac{a^i-1}{a-1}$, from which it
is clear that $\ord(\ell)=\ord(a)$, where $\ord(a)$ is the order of $a$ in
$\Fq^*$. In addition, if $a=1$ and $b\neq0$, $\ell^i(x)=x+ib$, so that
$\ord(\ell)=p$. To conclude,
\begin{equation}\label{eq:ord}
\ord(ax+b)=\begin{cases}
\ord(a) & a\neq 1\\
p & a=1,b\neq 0\\
1 & a=1,b=0.
\end{cases}
\end{equation}

We let $\ga(q)$ act on $\Fq$ in the obvious way, by setting
$\ell\cdot \zeta:=\ell(\zeta)$ for $\ell\in\ga(q)$ and $\zeta\in
\Fq$. It is easily verified
that if $\zeta$ is not a fixed point of $\ell$ (i.e. $\ell(\zeta)\neq
\zeta$), then the stabilizer of $\zeta$ in the cyclic subgroup
$\langle \ell \rangle$
generated by $\ell$ is trivial, so that the orbit
$\{\ell(\zeta),\cdots,\ell^{\ord(\zeta)}(\zeta)\}$ has $\ord(\ell)$
distinct elements. This fact will be used frequently without further
mention throughout the paper. 

%---------------------------------------------------------
\subsection{The splitting field of $x^q-ax-b$}
%---------------------------------------------------------
The following properties of the splitting field of $x^q-\ell(x)$ for
$\ell\in \ga(q)$ will be useful ahead. 

%%%%%%%%%%%%%%%
% PROPOSITION %
%%%%%%%%%%%%%%%
\begin{prop}\label{prop:field}
Let $\ell(x)=ax+b\in\ga(q)$. Let $L$ be the
splitting field of $h(x):=x^q-\ell(x)$. Then
$[L:\Fq]=\ord(\ell)$. Moreover, either $(a,b)=(1,0)$ and
$L=\Fq$, or $L\supsetneq \Fq$, and $L=\Fq(\zeta)$ for any root
$\zeta$ of $h$ outside $\Fq$.
\end{prop}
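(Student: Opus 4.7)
The plan is to prove the result by case analysis based on the formula for $\ord(\ell)$ given in \eqref{eq:ord}. The trivial case $(a,b) = (1,0)$ makes $h(x) = x^q - x$, which splits in $\Fq$ itself, giving $L = \Fq$ and $[L:\Fq] = 1 = \ord(\ell)$; the second claim of the proposition is vacuous here.

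In the two nontrivial cases the main observation is that, since $a, b \in \Fq$, the Frobenius automorphism $\phi \colon x \mapsto x^q$ acts on any root $\zeta$ of $h$ by $\phi(\zeta) = \zeta^q = a\zeta + b = \ell(\zeta)$, and $\phi$ commutes with $\ell$, so by induction $\phi^i(\zeta) = \ell^i(\zeta)$ for every $i$. Consequently $[\Fq(\zeta):\Fq]$ equals the size of the Frobenius orbit of $\zeta$, which is the smallest $i \geq 1$ with $\ell^i(\zeta) = \zeta$. When $a = 1$ and $b \neq 0$, $\ell$ has no fixed points at all, and no root of $h$ lies in $\Fq$ (else $h(\alpha) = -b \neq 0$), so every root $\zeta$ is non-fixed and the orbit has size $\ord(\ell) = p$. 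When $a \neq 1$, the unique fixed point of $\ell$ is $-b/(a-1) \in \Fq$, so any $\zeta \notin \Fq$ is non-fixed and the orbit has size $\ord(\ell) = \ord(a)$. Either way $[\Fq(\zeta):\Fq] = \ord(\ell)$.

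It remains to show that $\Fq(\zeta)$ already contains all roots of $h$. If $\alpha, \beta$ are two roots, then $\gamma := \alpha - \beta$ satisfies $\gamma^q = a\gamma$; since this relation defines an $\Fq$-linear subspace of $\overline{\Fq}$ of dimension at most one, the set of roots forms a single affine $\Fq$-line. In the additive case $a = 1$ the line is $\zeta + \Fq$, trivially inside $\Fq(\zeta)$. When $a \neq 1$, the key step is to take $\gamma_0 := \zeta^q - \zeta = (a-1)\zeta + b$; a direct computation using $\zeta^{q^2} = \ell^2(\zeta) = a^2\zeta + ab + b$ verifies $\gamma_0^q = a\gamma_0$, so $\gamma_0 \in \Fq(\zeta)$ spans the solution space, and all roots $\zeta + c\gamma_0$ ($c \in \Fq$) lie in $\Fq(\zeta)$. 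Hence $L = \Fq(\zeta)$ for every $\zeta \notin \Fq$, and comparing degrees gives the claim.

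The only mildly subtle step is the identity $\gamma_0^q = a\gamma_0$ for $\gamma_0 = \zeta^q - \zeta$, which is the engine ensuring the splitting field equals $\Fq(\zeta)$ rather than something larger; everything else is structural, comparing orbit sizes of Frobenius with those of $\ell$ and invoking \eqref{eq:ord}.
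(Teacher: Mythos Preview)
Your proof is correct. Both your argument and the paper's hinge on the same identity $\zeta^{q^i} = \ell^i(\zeta)$, which matches the Frobenius orbit of a root $\zeta$ with its $\langle\ell\rangle$-orbit; since $\zeta \notin \Fq$ forces $\ell(\zeta) \neq \zeta$, this orbit has exactly $\ord(\ell)$ elements, giving $[\Fq(\zeta):\Fq] = \ord(\ell)$.

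The only real divergence is in how you conclude $L = \Fq(\zeta)$. You explicitly parametrize the full root set of $h$ as the affine $\Fq$-line $\zeta + \Fq\cdot\gamma_0 \subseteq \Fq(\zeta)$, with $\gamma_0 = \zeta^q - \zeta$ spanning the kernel of $z\mapsto z^q - az$. The paper instead observes that since \emph{every} root of $h$ outside $\Fq$ has the same degree $\ord(\ell)$ over $\Fq$, all roots lie in the unique degree-$\ord(\ell)$ extension $\eff_{q^{\ord(\ell)}}$, which is $\Fq(\zeta)$ for any such $\zeta$. Your route is more constructive and makes the root set concrete; the paper's is a line shorter by leaning on the uniqueness of finite-field extensions of a given degree. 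Either works.
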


\begin{proof}
First, $\ord(\ell)=1$ if and only if $a=1$ and $b=0$, in
which case $L=\Fq$. Suppose, therefore, that $(a,b)\neq (1,0)$. Since $h$ is
separable,\footnote{As $h'(x)=-a\neq 0$ is coprime to $h$.} monic,
has degree $q$, and is not equal to $x^q-x$, its splitting field is
not $\Fq$. Let $\zeta\in L\smallsetminus{\Fq}$ be a root of
$h$. 

Let $\ol:=\ord(\ell)$. We claim that
$p_{\zeta}(x):=\prod_{i=0}^{\ol-1}\big(x-\ell^i(\zeta)\big)$
is the minimal polynomial of $\zeta$ over $\Fq$. Clearly
$p_{\zeta}(\zeta)=0$. In addition, since $\zeta\notin \Fq$,
$\ell(\zeta)\neq \zeta$ (for otherwise $\zeta^q=\zeta$, as $\zeta$ is a
root of $h$), and therefore the roots 
$\{\ell^i(\zeta)=\zeta^{q^i}\}_{i=0}^{\ol-1}$ of $p_{\zeta}$ are
distinct elements\footnote{Recall that since $\zeta$ is not a fixed 
point, its orbit under the action of $\langle \ell \rangle$ has $\ol$ 
elements.} in the orbit of $\zeta$ under the action of
$\Gal(L/\Fq)$. 
Actually, the roots of $p_{\zeta}$ are an \emph{entire} orbit, as
$\zeta^{q^{\ol}}=\ell^{\ol}(\zeta)=\zeta$. This proves our claim.

Hence, for any root $\zeta$ outside $\Fq$, $[\Fq(\zeta):\Fq]=\ol$,
so all these roots lie in the same field $L=\eff_{q^{\ol}}$, as
required. 
\end{proof}

%%%%%%%%%%%%%%%%%%%%%%%%%%%    
\section{Code construction}\label{sec:const}
%%%%%%%%%%%%%%%%%%%%%%%%%%%
In this section, we first define the permuted product code as an
evaluation code. It then follows almost immediately that the code is
indeed a permuted product code, and that it is a subcode of an RS
code. 

Let $\ell_1(x),\ell_2(x)\in \ga(q)$ be two affine polynomials 
of coprime orders $m,n$, respectively, and let $\alpha,\beta\in \Fq$ be
such that $\ell_1(\alpha)\neq \alpha, \ell_2(\beta)\neq \beta$, i.e.,
$\alpha,\beta$ are not fixed points of $\ell_1$ and $\ell_2$,
respectively. 

For $f\in\Fq[x,y]$ and $j\in\{0,\ldots,n-1\}$, let
$$
\ev_j(f):=  \begin{bmatrix}
f(\ell_1^{jm}(\alpha),\ell_2^{jm}(\beta)) \\
f(\ell_1^{jm+1}(\alpha),\ell_2^{jm+1}(\beta))\\
\vdots\\
f(\ell_1^{jm+m-1}(\alpha),\ell_2^{jm+m-1}(\beta))
\end{bmatrix}\in\Fq^m,
$$
and let
$\ev\colon \Fq[x,y]\to (\Fq^m)^n$ be the function that
maps $f$ to the vector whose $j$-th entry is the column vector
$\ev_j(f)$, $j\in\{0,\ldots,n-1\}$. Explicitly, $\ev(f)$ equals
\begin{equation}  \label{encoding}
\begin{pmatrix}
 \begin{bmatrix}
f(\ell_1^0(\alpha),\ell_2^0(\beta)) \\
f(\ell_1(\alpha),\ell_2(\beta))\\
\vdots\\
f(\ell_1^{m-1}(\alpha),\ell_2^{m-1}(\beta))
\end{bmatrix}, 
 & 
  \begin{bmatrix}
f(\ell_1^m(\alpha),\ell_2^m(\beta)) \\
f(\ell_1^{m+1}(\alpha),\ell_2^{m+1}(\beta))\\
\vdots\\
f(\ell_1^{2m-1}(\alpha),\ell_2^{2m-1}(\beta))
\end{bmatrix}, 
 & \ldots,
 &
 \begin{bmatrix}
f(\ell_1^{(n-1)m}(\alpha),\ell_2^{(n-1)m}(\beta)) \\
f(\ell_1^{(n-1)m+1}(\alpha),\ell_2^{(n-1)m+1}(\beta))\\
\vdots\\
f(\ell_1^{nm-1}(\alpha),\ell_2^{nm-1}(\beta))
\end{bmatrix}
\end{pmatrix}.
\end{equation}

Let $s\leq m$ and $t\leq n$ be positive integers. 
Writing $\Fq^{s,t}[x,y]:=\{f\in \Fq[x,y]|\deg_x(f)<s,\deg_y(f)<t\}$,
the permuted product code $\clong(s,t)\subseteq (\Fq^m)^n$ is
defined as 
$$
\clong(s,t):=\big\{\ev(f)|f\in \Fq^{s,t}[x,y] \big\}.
$$

To simplify notation, we will sometimes identify $(\Fq^m)^n$ with
$\Fq^{m\times n}$, so that codewords of $\clong(s,t)$ will be regarded
either as vectors of column vectors, or as $m\times n$ matrices in the
obvious way.

Next, we would like to show that $\clong(s,t)$ is indeed a permuted
product code. Toward this end,
let $\frak{l}_2:=\ell_2^{m}$, and note that $\ord(\frak{l}_2)$ is 
$n$, since $m,n$ are coprime. Also, $\beta$ is not a fixed point of
$\frak{l}_2$.\footnote{Since $m,n$ are coprime, $\ell_2^{m}(x)\neq x$
and therefore a fixed point of $\ell_2^m$ is a fixed point of
$\ell_2$.  
%contradicting the fact that the stabilizer of $\beta$ in $\langle
%\ell_2\rangle$ is trivial.
} Let
$A:=\{\ell_1^i(\alpha)\}_{i\in\{0,\ldots,m-1\}}$,
$B:=\{\frak{l}_2^j(\beta)\}_{j\in\{0,\ldots,n-1\}}$. Finally, let
$[m]^{-1}$ be a  representative for the inverse of $m$ in
$\bbZ/n\bbZ$. Then we have the following proposition.

%%%%%%%%%%%%%%%
% PROPOSITION %
%%%%%%%%%%%%%%%
\begin{prop}\label{prop:prod}
It holds that
$$
\clong=\pi\big(\rs(s,A)\otimes \rs(t,B)\big),
$$
where $\pi$ is the permutation that shifts row $i\in\{0,\ldots,m-1\}$
to the left by $i\cdot [m]^{-1}$. 
\end{prop}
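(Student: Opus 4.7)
The plan is to verify the identity entry-by-entry. Both sides of the claimed equality are images of the parameter space $\Fq^{s,t}[x,y]$ under linear maps, so it suffices to fix an arbitrary $f\in\Fq^{s,t}[x,y]$, define $F_f\in \rs(s,A)\otimes \rs(t,B)$ by $(F_f)_{i,j}:=f(a_i,b_j)=f(\ell_1^i(\alpha),\ell_2^{jm}(\beta))$, and show that $\pi(F_f)=\ev(f)$ as an element of $\Fq^{m\times n}$. Since every element of $\rs(s,A)\otimes \rs(t,B)$ arises as such an $F_f$ by the bivariate-evaluation description of the tensor product quoted in \S\ref{sec:prelim}, this will yield both containments simultaneously.

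Next I would read off both matrices. From the definition of $\ev$ in \eqref{encoding}, the $(i,j)$ entry of $\ev(f)$ (with $i\in\{0,\ldots,m-1\}$ and $j\in\{0,\ldots,n-1\}$) is $f(\ell_1^{jm+i}(\alpha),\ell_2^{jm+i}(\beta))$. Since $\ord(\ell_1)=m$, the first coordinate simplifies to $\ell_1^i(\alpha)=a_i$, so
\[
\ev(f)_{i,j}=f\bigl(\ell_1^i(\alpha),\,\ell_2^{jm+i}(\beta)\bigr).
\]
On the other side, $\pi(F_f)_{i,j}=(F_f)_{i,\,j+i[m]^{-1}\bmod n}=f\bigl(\ell_1^i(\alpha),\,\ell_2^{(j+i[m]^{-1}\bmod n)\,m}(\beta)\bigr)$. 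The exponent on $\ell_2$ satisfies
\[
(j+i[m]^{-1}\bmod n)\cdot m \;\equiv\; jm + i[m]^{-1}m \;\equiv\; jm+i \pmod{n},
\]
where the first congruence drops a multiple of $nm$ and the second uses the defining property of $[m]^{-1}$. Since $\ord(\ell_2)=n$, the value $\ell_2^{k}(\beta)$ depends only on $k \bmod n$, so the two $\ell_2$-exponents give the same field element, and hence $\pi(F_f)_{i,j}=\ev(f)_{i,j}$.

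The ``main obstacle'' is really only making sure the index gymnastics with the modular reductions is done cleanly: the cyclic shift $\pi$ acts modulo $n$ on column indices, but the exponent on $\ell_2$ in the definition of $\ev$ grows linearly without reduction, so the bridge between the two is supplied exactly by the identity $[m]^{-1}m\equiv 1\pmod{n}$ combined with $\ord(\ell_2)=n$. I would also briefly justify the side remarks already made in the setup, namely that $|A|=m$, $|B|=n$, and $\ord(\frak{l}_2)=n$ (all of which follow from $\gcd(m,n)=1$ together with the fact that $\alpha,\beta$ are not fixed points of $\ell_1,\ell_2$), so that $\rs(s,A)\otimes \rs(t,B)$ is a well-defined subspace of $\Fq^{m\times n}$ and $\pi$ is a well-defined permutation of its coordinates.
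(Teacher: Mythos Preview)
Your proof is correct and follows essentially the same approach as the paper: both arguments compare the $(i,j)$-entries of $\ev(f)$ and of the permuted tensor product codeword, reducing the $\ell_1$-exponent modulo $m$ and the $\ell_2$-exponent modulo $n$, and using $[m]^{-1}m\equiv 1\pmod n$ to match them. Your write-up is simply more explicit about the modular bookkeeping and the well-definedness of $A$, $B$, and $\pi$, but the underlying argument is identical.
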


\begin{proof}
Let $\nu\in\{0,\ldots, mn-1\}$ be a ``folded running
index'' in an $m\times n$ matrix, where for a coordinate index
$(i,j)$ ($i\in\{0,\ldots, m-1\}$, $j\in\{0,\ldots, n-1\}$), we let
$\nu:=mj+i$. The 
$(i,j)$-th entry of the codeword corresponding to $f\in \Fq^{s,t}[x,y]$ is 
\begin{equation}\label{eq:genentry}
f\big(\ell_1^{\nu}(\alpha),\ell_2^{\nu}(\beta)\big) =
f\big(\ell_1^{i}(\alpha),\ell_2^{m(j+[m]^{-1}i)}(\beta)\big) =
f\big(\ell_1^{i}(\alpha),\frak{l}_2^{j+[m]^{-1}i}(\beta)\big).
\end{equation}
On the other hand, the $(i,j)$-th entry of the codeword of
$\rs(s,A)\otimes \rs(t,B)$ corresponding to $f\in \Fq^{s,t}[x,y]$ is
$f\big(\ell_1^{i}(\alpha),\frak{l}_2^{j}(\beta)\big)$, and it is clear
that \eqref{eq:genentry} corresponds to the stated cyclic shifts of the
rows.
\end{proof}
The following corollary gives the basic parameters of the permuted product code. 
%%%%%%%%%%%%%
% COROLLARY %
%%%%%%%%%%%%%
\begin{cor}\label{cor:params}
The code $\clong(s,t)$ is an $\Fq$-linear code of length $n$, rate 
$R=\frac{st}{mn}$ and minimum distance at least $n-t+1$.
\end{cor}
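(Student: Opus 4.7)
The plan is to deduce all three claims from Proposition~\ref{prop:prod}, which identifies $\clong(s,t)$ with $\pi\bigl(\rs(s,A)\otimes\rs(t,B)\bigr)$.

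$\Fq$-linearity is immediate from the linearity of $\ev$, and the length in $\Fq^m$-symbols equals the number of columns, which is $n$ by construction. For the dimension, I would argue that the map $\Fq^{s,t}[x,y]\to\clong(s,t)$ sending $f$ to $\ev(f)$ is injective: since $\pi$ is a bijection of the $mn$ matrix entries, injectivity reduces to that of the tensor-product evaluation $f\mapsto\bigl(f(a,b)\bigr)_{(a,b)\in A\times B}$, which is standard because $|A|=m\geq s$ and $|B|=n\geq t$ (e.g.\ by bivariate interpolation). Hence $\dim_{\Fq}\clong(s,t)=st$, and with alphabet $\Sigma=\Fq^m$ of log-size $m$ and length $n$, the rate equals $\frac{st}{mn}$, as claimed.

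For the minimum distance, we must bound the number of nonzero $\Fq^m$-columns of any nonzero codeword. The key observation is that $\rs(t,B)$ is a \emph{cyclic} code: since $B=\{\frak{l}_2^{\,j}(\beta)\}_{j=0}^{n-1}$ is a single orbit of the affine map $\frak{l}_2$, the substitution $f(y)\mapsto f(\frak{l}_2(y))$ preserves the degree bound $\deg<t$ and induces a one-step cyclic shift on the evaluation vector. Consequently, the row-wise cyclic shifts performed by $\pi$ send each row of a codeword of $\rs(s,A)\otimes\rs(t,B)$, which lies in $\rs(t,B)$, to another element of $\rs(t,B)$. Thus every row of every $c\in\clong(s,t)$ lies in $\rs(t,B)$; if $c\neq 0$ then at least one such row is a nonzero codeword of $\rs(t,B)$, which by the Singleton bound has at least $n-t+1$ nonzero entries, forcing at least $n-t+1$ columns of $c$ to be nonzero.

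The only subtle point is the last step: because $\pi$ shifts different rows by different amounts, one cannot directly transfer the column-weight bound of the unpermuted tensor product to $\clong(s,t)$. The cyclicity of $\rs(t,B)$, which is an automatic by-product of the affine-orbit description of $B$, is exactly what bridges this gap.
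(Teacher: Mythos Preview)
Your proof is correct and follows essentially the same line as the paper's: both deduce everything from Proposition~\ref{prop:prod} and bound the column weight by observing that any nonzero row of a codeword lies in $\rs(t,B)$ and hence has at least $n-t+1$ nonzero entries. The paper states this in one sentence and leaves the cyclicity of $\rs(t,B)$ implicit (it is spelled out only in the subsequent proposition), whereas you make it explicit; note also that for the distance bound alone one could bypass cyclicity entirely by observing that cyclic shifts preserve the Hamming weight of each row.
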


\begin{proof}
All assertions follow immediately from Proposition
\ref{prop:prod}. For example, for a nonzero $c\in \clong(s,t)$,
each non-zero row has weight at least $n-t+1$ as it is a nonzero
codeword of $\rs(t,B)$, therefore the number of non-zero columns is
certainly at least $n-t+1$.  
\end{proof}

Note that the code is close to being MDS if $s$ is close to $m$.

%%%%%%%%%%
% REMARK %
%%%%%%%%%%
\begin{remark}
{\rm
Some remarks are in place:
\begin{enumerate}

\item 
The construction of folded RS codes \cite{GR08} involves a
single affine polynomial, $\gamma x$, for a primitive $\gamma\in
\Fq^*$. This results in a code whose length is 
smaller than $q-1$ by a factor of the {\it folding parameter}. A
similar assertion is also true for {\it additive folded RS codes}
\cite{GR08}, \cite{BHKS23}, and for the more general {\it 
affine folded RS codes} \cite{BHKS23}, which
again use a single affine polynomial. The idea of 
using \emph{two} affine polynomials of coprime orders is a
generalization that enables to construct a capacity-achieving code of
length $q-1$, as will be shown below. 

\item The product structure can be interpreted as follows. If $s=m$,
then the vertical code is just $\Fq^m$, and the product is an
interleaved RS code \cite{SSB09}. While the interleaved code
itself does not guarantee a small list \cite{GX13}, moving
from $s=m$ to $s$ slightly smaller than $m$ (informally) results
in a guaranteed small list, as will be shown below. While in
\cite{GX13}, the non-trivial concept of  {\it subspace designs} 
was required for assuring a small list, here, 
the simpler construction of tensor product with shifts is
used.\footnote{We also note that the construction of \cite{GX13} has
some resemblance to a product code: instead of using a free matrix of
information symbols, each column is constrained to be in a different
subspace from a subspace design.} 

\end{enumerate}
}
\end{remark}
At this point, it is fairly clear that  $\clong(s,t)$ can be viewed as  a linear
subcode of $\rs_{\Fqm{m}}(t,B)$, which is an interleaved code, as
$B\subseteq \Fq$. We record this property in the following proposition. 
%%%%%%%%%%%%%%%
% PROPOSITION %
%%%%%%%%%%%%%%%
\begin{prop}
$\clong(s,t)$ can be viewed as  a linear
subcode of $\rs_{\Fqm{m}}(t,B)$.\end{prop}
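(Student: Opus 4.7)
The plan is to combine Proposition \ref{prop:prod} with two standard facts: that $\rs(t,B)$ is invariant under cyclic shifts (because $B$ is an orbit), and that an interleaved RS code over $\Fq$ is the same thing as an RS code over the extension field $\Fqm{m}$ with the same evaluation set.

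First I would invoke Proposition \ref{prop:prod} to write any codeword of $\clong(s,t)$ as $\pi(M)$ with $M\in\rs(s,A)\otimes\rs(t,B)$. Since $\pi$ acts by cyclically shifting each row of $M$, it suffices to show that cyclic shifts preserve $\rs(t,B)$. For this, recall that $B=\{\frak{l}_2^j(\beta)\}_{j=0}^{n-1}$, so a codeword of $\rs(t,B)$ has the form $(f(\beta),f(\frak{l}_2(\beta)),\ldots,f(\frak{l}_2^{n-1}(\beta)))$ for some $f\in\Fq[y]$ with $\deg(f)<t$. A single left cyclic shift produces $(g(\beta),g(\frak{l}_2(\beta)),\ldots,g(\frak{l}_2^{n-1}(\beta)))$ for $g(y):=f(\frak{l}_2(y))$, using that $\frak{l}_2^n(\beta)=\beta$. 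Since $\frak{l}_2$ is affine, $\deg(g)=\deg(f)<t$, so the shifted vector is again in $\rs(t,B)$. Consequently, every row of every codeword of $\clong(s,t)$ lies in $\rs(t,B)$.

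The last step is the well-known identification. Fix an $\Fq$-basis $e_1,\ldots,e_m$ of $\Fqm{m}$ and, for a matrix $C\in\Fq^{m\times n}$ whose rows $C_1,\ldots,C_m$ all lie in $\rs(t,B)$, write $C_i=(f_i(b_0),\ldots,f_i(b_{n-1}))$ with $f_i\in\Fq[y]$, $\deg(f_i)<t$. Setting $f(y):=\sum_{i=1}^m e_i f_i(y)\in\Fqm{m}[y]$ gives $\deg(f)<t$, and the $j$-th column of $C$, read as the element $\sum_i e_i f_i(b_j)=f(b_j)\in\Fqm{m}$, shows that $C$ corresponds to a codeword of $\rs_{\Fqm{m}}(t,B)$. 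Combining this with the previous paragraph yields the desired embedding.

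The only mildly delicate point is verifying that $\rs(t,B)$ is closed under cyclic shifts in the right indexing, but this is immediate from the fact that $B$ is an $\frak{l}_2$-orbit and that composition with an affine map preserves degree; nothing in the argument requires anything beyond the definitions already established.
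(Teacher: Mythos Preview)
Your proof is correct and follows essentially the same route as the paper: invoke Proposition \ref{prop:prod}, observe that $\rs(t,B)$ is closed under cyclic shifts because $B$ is an $\frak{l}_2$-orbit and composition with an affine map preserves degree, and then use the standard identification of the interleaved code with $\rs_{\Fqm{m}}(t,B)$ via a choice of $\Fq$-basis. The paper's proof is slightly terser in the identification step (it just states that the interleaved code equals $\trs_{\Fqm{m}}(t,B)$ since $B\subseteq\Fq$), but the content is the same.
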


\begin{proof}
Let $\trs_{\Fqm{m}}(t,B)\subset
(\Fq^m)^n$ be the code obtained by replacing each entry of each
codeword of $\rs_{\Fqm{m}}(t,B)$ by the column vector of its
coefficients in the decomposition according to a fixed  basis for $\Fqm{m}/\Fq$. Then, it is sufficient to show that   
$\clong(s,t)\subseteq \trs_{\Fqm{m}}(t,B)$.
Since the evaluation set $B$ is a subset of $\Fq$, $\trs_{\Fqm{m}}(t,B)$ is the
interleaved code whose codewords are obtained by choosing freely $m$ rows from
$\rs_{\Fq}(t,B)$, regardless of the basis choice. Note that by the
definition of $B$, $\rs_{\Fq}(t,B)$ is  cyclic, as cyclically shifting the
evaluation vector of a polynomial $f(x)$ on $B$ results
in the evaluation vector of $f(\frak{l}_2(x))$ on $B$.  Since
$\clong=\pi\big(\rs_{\Fq}(s,A)\otimes \rs_{\Fq}(t,B)\big)$ by Proposition
\ref{prop:prod}, each row of $\clong$ is a cyclic shift of a codeword
of $\rs_{\Fq}(t,B)$, and therefore again a codeword of $\rs_{\Fq}(t,B)$, and the result follows.
\end{proof}

%%%%%%%%%%%%%%%%%%%%%%%%%%%%%%%%%%%%%%%%%%%%%%%%%
\section{List decoding the permuted product code}\label{sec:listdecode}
%%%%%%%%%%%%%%%%%%%%%%%%%%%%%%%%%%%%%%%%%%%%%%%%%
Let the received, possibly corrupted, version of the codeword be 
\begin{equation}
    \label{received-word}
r=\begin{pmatrix}
    r_{0,0} & r_{0,1} & \ldots & r_{0,n-1}\\
    r_{1,0} & r_{1,1} & \ldots & r_{1,n-1}\\
    \vdots & \vdots & \ddots & \vdots\\
    r_{m-1,0} & r_{m-1,1} & \ldots & r_{m-1,n-1}
\end{pmatrix}\in \Fq^{m\times n}.
\end{equation}

The goal is to recover all polynomials $f(x,y)\in \Fq^{s,t}[x,y]$
whose encoding \eqref{encoding} agrees with $r$ on at least $\gamma$ of the
columns, for some agreement parameter $\gamma$. 
For large enough
$\gamma$, say, at least half of the minimum distance bound given in
Corollary \ref{cor:params}, the polynomial $f$, if exists, is unique. We
would like to decode beyond the unique decoding regime, i.e., for a
much smaller agreement parameter $\gamma$, by sacrificing the
uniqueness and instead outputting a list of possible codewords. To
this end, we adapt the known algebraic technique to list-decode folded
RS codes and their variants.  

%------------------------------------
\subsection{Polynomial interpolation}
%------------------------------------
In what follows, we assume that $s<m$. For a positive integer $w\leq
m-s$, consider polynomials 
of the form  
\begin{equation}
    \label{polynomial-form}
\sum_{i=0}^{w-1} p_i(x,y)z_i\text{ ,
where } \deg_x(p_i)\leq m-s-w  \text{ and }  \deg_y(p_i)\leq D-t,
\text{ for all } i,
\end{equation}
in $\Fq[x,y, z_0,\ldots, z_{w-1}]$, for some integer $D$ to be
determined later.  

The goal in the interpolation step is to interpolate a nonzero
polynomial $Q$ of the form \eqref{polynomial-form} such that for each
$0\leq j\leq n-1$, 
\begin{equation}
    \label{eq-2}
    Q(\ell_1^{jm+i}(\alpha),\ell_2^{jm+i}(\beta),r_{i,j},\ldots,
r_{i+w-1,j})=0, \text{ for } 0\leq i \leq m-w. 
\end{equation}
Note that for each $j$, the constraints \eqref{eq-2} are a  collection
of $m-w+1$ homogeneous linear constraints on the coefficients of the
polynomial $Q$, and in total there are $n(m-w+1)$ such constraints.   
The following lemma shows that a nonzero interpolation polynomial $Q$
exists and can be found efficiently.  

%%%%%%%%%
% LEMMA %
%%%%%%%%%
\begin{lemma}
With hindsight, for 
\begin{equation}
\label{value-of-D}
D:=\Big\lfloor\frac{nm}{w(m-s-w+1)} \Big\rfloor +t,
\end{equation}
 a nonzero polynomial $Q$  of the form \eqref{polynomial-form} which 
 satisfies the interpolation constraints \eqref{eq-2} exists and can
be found in $O((nm)^3)$ field operations over $\Fq$. Furthermore, we
can assume that 
$Q$ and the polynomial $y^q-\ell_2(y)$ are coprime.
\end{lemma}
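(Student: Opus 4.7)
The plan is a standard dimension count for the existence and complexity, followed by a short argument to enforce coprimality with $y^q-\ell_2(y)$.

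\emph{Existence and efficient computation.} The polynomial $Q$ in \eqref{polynomial-form} is determined by the $w$ bivariate polynomials $p_0,\ldots,p_{w-1}$, each carrying at most $(m-s-w+1)(D-t+1)$ unknown $\Fq$-coefficients, so the ansatz has $w(m-s-w+1)(D-t+1)$ free parameters. The conditions \eqref{eq-2} constitute a homogeneous $\Fq$-linear system of $n(m-w+1)$ equations in these parameters. By the choice of $D$ in \eqref{value-of-D} one has $D-t+1 > nm/\bigl(w(m-s-w+1)\bigr)$, which yields $w(m-s-w+1)(D-t+1) > nm \geq n(m-w+1)$ (the latter inequality uses $w\geq 1$). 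Thus the number of unknowns strictly exceeds the number of equations, and Gaussian elimination on the resulting $O(nm)\times O(nm)$ coefficient matrix produces a nonzero solution in $O((nm)^3)$ field operations.

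\emph{Coprimality, by dividing out the common factor.} Since $y^q-\ell_2(y)\in\Fq[y]$, any common factor of $Q$ and $y^q-\ell_2(y)$ in $\Fq[x,y,z_0,\ldots,z_{w-1}]$ must itself lie in $\Fq[y]$. My plan is to let $g(y)\in\Fq[y]$ be the largest factor of $y^q-\ell_2(y)$ that divides $Q$ and to replace $Q$ by $Q/g(y)$. The quotient is nonzero, still has the form \eqref{polynomial-form} (dividing by an element of $\Fq[y]$ preserves $\deg_x$ and can only reduce $\deg_y$), and continues to satisfy \eqref{eq-2} \emph{provided} $g(y)$ vanishes at none of the evaluation values $\ell_2^{jm+i}(\beta)\in\Fq$. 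Iterating, or taking $g$ maximal, then yields the desired coprimality.

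\emph{Non-vanishing, and main obstacle.} Verifying this last assertion is the only genuinely non-trivial step, and relies on Proposition~\ref{prop:field}. Let $g_0(y)$ be any irreducible factor of $y^q-\ell_2(y)$ over $\Fq$. By Proposition~\ref{prop:field}, any root of $y^q-\ell_2(y)$ outside $\Fq$ has $\Fq$-minimal polynomial of degree $n=\ord(\ell_2)\geq 2$, so either $\deg g_0\geq 2$ (whence $g_0$ has no root in $\Fq$ at all) or $g_0(y)=y-\gamma$ with $\gamma\in\Fq$ satisfying $\gamma=\gamma^q=\ell_2(\gamma)$, i.e.\ $\gamma$ is a fixed point of $\ell_2$. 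In either case $g_0$ cannot vanish on the orbit of $\beta$ under $\ell_2$: in the linear case, $\ell_2^k(\beta)=\gamma$ together with $\ell_2(\gamma)=\gamma$ would give $\beta=\ell_2^{-k}(\gamma)=\gamma$, contradicting the hypothesis that $\beta$ is not a fixed point of $\ell_2$. Hence no irreducible factor of $y^q-\ell_2(y)$, and therefore no divisor $g(y)$, vanishes at any of the evaluation values, completing the reduction. The main obstacle is precisely this step: ensuring that dividing out the common factor preserves the interpolation identities, which hinges on the orbit-versus-fixed-point dichotomy afforded by Proposition~\ref{prop:field}.
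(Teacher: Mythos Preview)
Your proof is correct and follows essentially the same approach as the paper: a dimension count for existence, then dividing out the common factor with $y^q-\ell_2(y)$ after checking it cannot vanish on the $\ell_2$-orbit of $\beta$. The only difference is cosmetic: where you invoke Proposition~\ref{prop:field} to classify the irreducible factors of $y^q-\ell_2(y)$, the paper argues the non-vanishing in one line by observing that for any $\beta'$ in the orbit (necessarily a non-fixed point in $\Fq$), $0\neq \beta'-\ell_2(\beta')=(\beta')^q-\ell_2(\beta')=g(\beta')h(\beta')$.
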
 

\begin{proof}
The total number of free variables in $Q$ is 
\begin{multline*}
w(D-t+1)(m-s-w+1) = w\Big(\Big\lfloor\frac{nm}{w(m-s-w+1)}
\Big\rfloor+1\Big)(m-s-w+1)\\
 > w\frac{nm}{w(m-s-w+1)}\cdot(m-s-w+1)=mn>n(m-w+1),
\end{multline*}
where the right-hand side is the number of homogeneous linear
equations for all interpolation constraints. This proves that a
non-zero $Q$ satisfying all constraints does exist, and the system of
equations (which has at most $nm$ constraints) has a
nontrivial solution that can be found efficiently.  
 
Lastly, we can assume that $Q$ and $y^q-\ell_2(y)$ are coprime, since
otherwise let $g(y)=\gcd(Q,y^q-\ell_2(y))$ and  write
$y^q-\ell_2(y)=g(y)h(y)$. We claim that $g(y)$ has no roots in the
orbit of $\beta$ under the action of $\langle\ell_2\rangle$, and
therefore the polynomial $Q/g$ satisfies too the constraints
\eqref{eq-2}.  Indeed, recall that $\beta$ was chosen to be a
non-fixed point of $\ell_2(y)$, hence also any other element $\beta'$
in the orbit  of $\beta$ under the action of $\langle \ell_2\rangle$
is too a non-fixed point. Therefore,
$$
0\neq \beta' - \ell_2(\beta')=(\beta')^q -
\ell_2(\beta')=g(\beta')h(\beta'),
$$
and the result follows. 
\end{proof}

Note that given a polynomial $Q$ of the form \eqref{polynomial-form} that
satisfies the interpolation constraints \eqref{eq-2}, it is straightforward
to modify $Q$ to be coprime to $y^q-\ell_2(y)$ while still satisfying
the constraints. This can be achieved by dividing $Q$ by any power of an 
irreducible factor of $y^q-\ell_2(y)$ that divides it. Importantly, there is no need for 
general factorization algorithms in this process, as  we focus in the
sequel on the case where  $\ell_2(x) = \gamma x$ for a primitive
$\gamma$. In such a scenario,  
$y^q - \ell_2(y) = y(y^{q-1} - \gamma)$ is the decomposition into 
irreducible factors \cite[Lemma 3.5]{GR08}.\footnote{In
the somewhat more general case where $\ell_2(x)=\gamma x+b$ with
non-zero $b$, it follows from the proof of Proposition
\ref{prop:field} that $x^q-\ell_2(x)$ factors as 
$(x-\delta)h(x)$ with $h(x)$ irreducible of degree $q-1$, and
$\delta$, the only root of $x^q-\ell_2(x)$ in $\Fq$, can be easily
found by linear algebra methods. We omit the details.}

%\zac{I think that the explanation is a bit too long. What is the
%difference between making sure $Q$ is not divisible by $y$ and $Q$ is
%not divisible by $y^{q-1}-\gamma$? they are both irreducible; you
%simply divide as much as possible. In other words, if we write
%$Q=Q'(y^{q-1}-\gamma)^a$, where is the largest such integer, then we
%simply take the new $Q$ to be $Q'$.} 
Now, dividing out the largest powers of the
irreducible factors that divide it in polynomial time is straightforward: For example,
for dividing out the largest power of $v(y):=y^{q-1} - \gamma$, it is
possible to iteratively divide all the $p_i(x,y)$ by $v(y)$, until the
first time at least one of the $p_i$'s is not divisible by it anymore.

To continue, we will need the following definition.
%%%%%%%%%%%%%%
% DEFINITION %
%%%%%%%%%%%%%%
\begin{defi}
{\rm
For a polynomial $Q$ of the form \eqref{polynomial-form}, and for
$f\in \Fq[x,y]$, we associate the bivariate polynomial
\begin{eqnarray*}
\qf(x,y)&:=&Q\big(x,y,f(x,y),f(\ell_1(x),\ell_2(y)),
\ldots,f(\ell_1^{w-1}(x),\ell_2^{w-1}(y))\big) \\
 &=& \sum_{i=0}^{w-1}p_i(x,y)f\big(\ell_1^i(x),\ell_2^i(y)\big).
\end{eqnarray*}
Consequently, if $f\in \Fq^{s,t}[x,y]$,
\begin{eqnarray}
\deg_x(\qf) &\leq& m-s-w+\deg_x(f(x,y))  < m-w, \label{eq:deg1}\\
 \deg_y(\qf)&\leq& D-t+\deg_y(f(x,y)) <  D \label{eq:deg2} 
\end{eqnarray}
}
\end{defi}

The following lemma shows the usefulness of the interpolation step for
list decoding.  

%%%%%%%%%
% LEMMA %
%%%%%%%%%
\begin{lemma}
\label{lem-interp}
Let $Q$ be a polynomial of the form \eqref{polynomial-form} that
satisfies the interpolation constraints \eqref{eq-2}. Assume that the
received word \eqref{received-word} agrees with the encoding of
$f(x,y)$ at the $j$-th coordinate for some $j\in\{0,\ldots,n-1\}$, i.e.,
$\ev_j(f)$ equals the $j$-th column of $r$. Then
$$
\qf(\ell_1^{jm+i}(\alpha),\ell_2^{jm+i}(\beta))=0 \text{ for }
i=0,\ldots,m-w.
$$
\end{lemma}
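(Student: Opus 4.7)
The plan is to prove the lemma by direct substitution, exploiting the key fact that the indices appearing in $\hat{Q}_f$ at the evaluation point $(\ell_1^{jm+i}(\alpha), \ell_2^{jm+i}(\beta))$ line up exactly with the indices in the interpolation constraint \eqref{eq-2} for the same $j$ and $i$. So the whole argument should be a one-line unrolling followed by invoking the hypothesis.

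First I would write out $\hat{Q}_f$ at the point $(\ell_1^{jm+i}(\alpha), \ell_2^{jm+i}(\beta))$ using its definition:
\[
\hat{Q}_f(\ell_1^{jm+i}(\alpha), \ell_2^{jm+i}(\beta))
= \sum_{k=0}^{w-1} p_k\big(\ell_1^{jm+i}(\alpha), \ell_2^{jm+i}(\beta)\big)\, f\big(\ell_1^{jm+i+k}(\alpha), \ell_2^{jm+i+k}(\beta)\big),
\]
where I have used the group-action identity $\ell_r^k \circ \ell_r^{jm+i} = \ell_r^{jm+i+k}$ for $r=1,2$.

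Next, I would use the assumption that the $j$-th column of $r$ agrees with $\ev_j(f)$, which gives $r_{i+k,j} = f(\ell_1^{jm+i+k}(\alpha), \ell_2^{jm+i+k}(\beta))$ for every $0 \le i+k \le m-1$. When $0 \le i \le m-w$ and $0 \le k \le w-1$, we have $i+k \le m-1$, so the substitution is valid throughout the sum. Performing this substitution turns the expression above into
\[
\sum_{k=0}^{w-1} p_k\big(\ell_1^{jm+i}(\alpha), \ell_2^{jm+i}(\beta)\big)\, r_{i+k,j}
= Q\big(\ell_1^{jm+i}(\alpha), \ell_2^{jm+i}(\beta), r_{i,j}, \ldots, r_{i+w-1,j}\big),
\]
which vanishes by the interpolation constraints \eqref{eq-2}.

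There is really no obstacle here: the statement is essentially tautological once the definitions of $\hat{Q}_f$ and of the interpolation constraints are written side by side. The only subtlety is the bookkeeping on the index range, namely that the restriction $i \le m-w$ is exactly what is needed so that the shifted indices $i,i+1,\ldots,i+w-1$ stay within $\{0,\ldots,m-1\}$, which is the range on which column agreement has been assumed.
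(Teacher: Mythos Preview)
Your proof is correct and follows essentially the same approach as the paper's: both are direct substitution arguments unrolling the definition of $\qf$ and invoking the interpolation constraints \eqref{eq-2}. The paper simplifies notation by taking $j=0$, whereas you handle general $j$ and make the index-range bookkeeping $i+k\le m-1$ explicit, but the substance is identical.
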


\begin{proof}
For simplicity, assume that $j=0$, and note that the general case
follows similarly.  The following is easy to verify. 
\begin{align*}
&\qf(\ell_1^i(\alpha),\ell_2^i(\beta))=\\   
& 
Q(\ell_1^i(\alpha),\ell_2^i(\beta),f(\ell_1^{i}(\alpha),\ell_2^{i}(\beta)),
\ldots,f(\ell_1^{i+w-1}(\alpha),\ell_2^{i+w-1}(\beta)))=\\
& 
Q(\ell_1^{i}(\alpha),\ell_2^{i}(\beta),r_{i,0},\ldots,r_{i+w-1, 0})=0,
\end{align*}
where the last equality follows by \eqref{eq-2}.
\end{proof}

%-------------------------------
\subsection{Outputting the list}
%-------------------------------
In this section, we present a method that uses the interpolation
polynomial in order to output the list of all polynomials
$f\in\Fq^{s,t}[x,y]$ whose encoding is close enough to the received
word $r$. 
Before we proceed, we will need the following simple lemma. 

%%%%%%%%%
% LEMMA %
%%%%%%%%%
\begin{lemma}
\label{deg-lemma}
Let $f(x,y)\in \Fq^{s,t}[x,y]$ be a polynomial. Assume that  there
exists a set $S\subseteq \Fq$ of size $s$ and a set $T_\alpha\subseteq
\Fq$ of size $t$ for any $\alpha\in S$, such that  
$$
f(\alpha,\beta)=0 \text{ for any } \alpha\in S \text{ and }
\beta\in T_\alpha.
$$ 
Then necessarily $f\equiv 0$. 
\end{lemma}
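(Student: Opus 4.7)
The plan is to fix an arbitrary $\alpha \in S$ and first kill the $y$-dependence, then use the resulting uniform vanishing on $S \times \Fq$ to kill the $x$-dependence.

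First I would observe that for each fixed $\alpha \in S$, the univariate polynomial $f(\alpha, y) \in \Fq[y]$ has degree at most $\deg_y(f) < t$, yet it vanishes on the $t$-element set $T_\alpha$. A univariate polynomial of degree less than $t$ with $t$ distinct roots must be the zero polynomial, so $f(\alpha, y) \equiv 0$ as a polynomial in $y$. In particular, $f(\alpha, \beta) = 0$ for \emph{every} $\beta \in \Fq$, not just for $\beta \in T_\alpha$.

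Next I would expand $f$ in powers of $y$, writing
$$
f(x,y) = \sum_{j=0}^{t-1} g_j(x)\, y^j, \qquad g_j(x) \in \Fq[x], \ \deg(g_j) < s.
$$
The vanishing $f(\alpha, y) \equiv 0$ for a fixed $\alpha \in S$ means $\sum_j g_j(\alpha) y^j = 0$ in $\Fq[y]$, which forces $g_j(\alpha) = 0$ for every $j \in \{0, \ldots, t-1\}$. Since this holds for all $\alpha \in S$ and $|S| = s$, each $g_j$ is a univariate polynomial of degree less than $s$ with at least $s$ distinct roots, hence $g_j \equiv 0$. Consequently $f \equiv 0$.

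There is no real obstacle here; the argument is a two-step application of the fact that a univariate polynomial of degree less than $N$ cannot have $N$ distinct roots unless it is zero, applied first in the $y$-variable (using the sets $T_\alpha$) and then in the $x$-variable (using the set $S$). The only mild subtlety is to notice that after the first step we get vanishing on all of $\{\alpha\} \times \Fq$, which makes the second step independent of the specific sets $T_\alpha$.
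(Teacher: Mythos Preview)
Your proof is correct and essentially identical to the paper's own argument: both fix $\alpha\in S$, use that $f(\alpha,y)$ has degree $<t$ with $t$ roots to conclude $f(\alpha,y)\equiv 0$, then write $f(x,y)=\sum_j g_j(x)y^j$ and apply the same univariate root-counting to each $g_j$ over $S$.
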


\begin{proof}
Let $f(x,y)=\sum_{i=0}^{\deg_y(f)}f_i(x)y^i$ and let $\alpha\in
S$. The univariate polynomial $f(\alpha,y)$ is of degree less than
$t$, however it vanishes on at least $t$ points, for each $\beta\in
T_\alpha$, therefore $f(\alpha,y)\equiv 0$, equivalently  $f_i(\alpha
)=0$ for any $i.$ However, $f_i(x)$ is a univariate polynomial of
degree less than $s$ that vanishes on at least $s$ points, for each
$\alpha\in S$, therefore $f_i(x)\equiv 0$ for any $i$, and the result
follows. 
\end{proof}

Assume that we have a polynomial $Q$ satisfying the interpolation
constraints.  Next, we would like to show that for a codeword that is
close enough to the received word \eqref{received-word},  the 
corresponding polynomial which generated the codeword is a root of
$Q$.  The following lemma shows exactly this. 

%%%%%%%%%
% LEMMA %
%%%%%%%%%
\begin{lemma}\label{lemma:fzeroQ} % old name: \label{lemma 2.3}
Let $f\in \Fq^{s,t}[x,y]$ be a polynomial whose encoding agrees with the
received word on at least
$$
n\Big(\frac{m}{w(m-s-w+1)}+\frac{t}{n} \Big)
$$
coordinates. Then $\qf(x,y)$ is the zero polynomial.
\end{lemma}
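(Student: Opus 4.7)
The plan is to show that $\hat{Q}_f$ has so many zeros, arranged in a grid-like pattern, that its degree bounds \eqref{eq:deg1} and \eqref{eq:deg2} force it to vanish identically. This is essentially a bivariate analog of the univariate ``too many roots'' step that appears in list-decoding FRS codes.

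First I would introduce the set $J \subseteq \{0,\ldots,n-1\}$ of columns on which $\ev_j(f)$ agrees with the $j$-th column of $r$. By hypothesis, $|J| \geq \frac{nm}{w(m-s-w+1)} + t$, which by \eqref{value-of-D} is at least $D$. Applying Lemma \ref{lem-interp} column by column, I get that for every $j\in J$ and every $i\in\{0,\ldots,m-w\}$,
\[
\qf\bigl(\ell_1^{jm+i}(\alpha),\,\ell_2^{jm+i}(\beta)\bigr)=0.
\]

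The key observation is that the coprimality of $m$ and $n$ decouples the two coordinates. Since $\ell_1$ has order $m$, the first coordinate simplifies to $\ell_1^{jm+i}(\alpha)=\ell_1^{i}(\alpha)$, independent of $j$. On the other hand, as $j$ ranges over $J$, the element $\ell_2^{jm+i}(\beta)=\ell_2^{i}\bigl(\ell_2^{jm}(\beta)\bigr)$ takes $|J|$ distinct values, because $\gcd(m,n)=1$ forces $j\mapsto \ell_2^{jm}(\beta)$ to be injective on $\{0,\ldots,n-1\}$ (this uses that $\beta$ is not a fixed point of $\ell_2$, as established right before Proposition \ref{prop:prod}).

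Fixing $i\in\{0,\ldots,m-w\}$, the univariate polynomial $\qf(\ell_1^{i}(\alpha), y)\in \Fq[y]$ has $\deg_y <D$ by \eqref{eq:deg2} and vanishes at $|J|\geq D$ distinct points, so it must be identically zero. Thus $\qf(x,y)$ vanishes on the grid $\{\ell_1^0(\alpha),\ldots,\ell_1^{m-w}(\alpha)\}\times \tilde B'$, where $\tilde B'\subseteq \tilde B$ has size $\geq D$. Taking $S$ to be the first $m-w$ of these $x$-values and $T_{\ell_1^i(\alpha)}=\tilde B'$, Lemma \ref{deg-lemma} applied to $\qf\in \Fq^{m-w,D}[x,y]$ (with $s\leftarrow m-w$, $t\leftarrow D$) yields $\qf\equiv 0$. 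Equivalently and more concretely, $\qf(x,y_0)$ has degree $<m-w$ in $x$ and $m-w+1$ distinct roots $\ell_1^i(\alpha)$, so it vanishes for every $y_0$.

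There is no real obstacle here; the only subtlety is noticing that because $\gcd(m,n)=1$ and $\ell_1^m$ is the identity, the vanishing points in each column all share the same $x$-coordinate but traverse $|J|$ distinct $y$-coordinates. That decoupling is exactly what turns the $|J|\cdot(m-w+1)$ scattered zeros into the rectangular zero pattern that Lemma \ref{deg-lemma} can digest.
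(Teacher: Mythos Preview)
Your proposal is correct and follows essentially the same approach as the paper: both use Lemma~\ref{lem-interp} to produce zeros of $\qf$, observe via $\ord(\ell_1)=m$ that the $x$-coordinate depends only on $i$ while coprimality of $m,n$ makes the $y$-coordinates distinct as $j$ varies over agreement columns, and then conclude by a two-variable ``too many roots'' argument (the paper invokes Lemma~\ref{deg-lemma} directly with varying $T_{\alpha'}$, whereas you unfold it into two successive univariate steps). One cosmetic point: the set $\tilde B$ is never defined, and the intermediate ``grid'' claim is slightly misleading since the actual $y$-zeros depend on $i$---but this is harmless, because by that point you have already shown $\qf(\ell_1^i(\alpha),y)\equiv 0$ for every $i$, so the final univariate-in-$x$ step (your ``more concretely'' sentence) stands on its own and is the clean way to finish.
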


\begin{proof}
As before, let $\nu\in\{0,\ldots,nm-1\}$ be a running index in the codeword array,
where for row index $i\in\{0,\ldots,m-1\}$ and column index
$j\in\{0,\ldots,n-1\}$, $\nu(i,j):=mj+i$. For convenience, we will write
$i(\nu):=\nu\bmod m$ and $j(\nu):=\lfloor \nu/m\rfloor$. 

When $\nu$ runs on an entire column except for the last $w-1$
coordinates (explicitly, $\nu\in \{jm,jm+1,\ldots,jm+m-w\}$ for some
$j\in\{0,\ldots,n-1\}$), $\ell_1^{\nu}(\alpha)$ runs on the same set
$S:=\{\ell_1^i(\alpha)|i\in\{0,\ldots, m-w\}\}$ of
$m-w+1$ elements, regardless of the column $j$.
Fixing $\alpha'\in S$, the
total number of choices of $\nu$ such that: 1. $j(\nu)$ is an
agreement column, and 2. $\ell_1^{\nu}(\alpha)=\alpha'$ (and therefore
$i(\nu)\leq m-w$), is exactly the number of agreement columns, that
is, at least
\begin{equation}\label{eq:agree}
n\Big(\frac{m}{w(m-s-w+1)}+\frac{t}{n} \Big) =\frac{mn}{w(m-s-w+1)}+t\geq D.
\end{equation}
Moreover, running on these choices of $\nu$, $\ell_2^\nu(\beta)$ runs
on distinct values,\footnote{Any two distinct such choices
of $\nu$, say $\nu_2>\nu_1$, satisfy $m|(\nu_2-\nu_1)$. Since
$0<\nu_2-\nu_1<mn$ and $\gcd(m,n)=1$, we must have
$\ell_2^{\nu_1}(\beta)\neq \ell_2^{\nu_2}(\beta)$. }
and hence on a set $T_{\alpha'}$ of size at least $D$. 

Since $\qf(\alpha',\beta')=0$ for all $\alpha'\in S$ and
all $\beta'\in T_{\alpha'}$ by Lemma \ref{lem-interp},
$|S|=m-w+1>\deg_x(\qf)$, and $|T_{\alpha'}|\geq
D>\deg_y(\qf)$ for all $\alpha'$ (using
\eqref{eq:deg1}, \eqref{eq:deg2}), it follows from Lemma
\ref{deg-lemma} that $\qf=0$.  
\end{proof}

By the above lemma, we conclude that any polynomial $f$ that generates
a close-enough codeword to the received word, satisfies
$\qf=0$. Therefore, the list decoding problem boils down to efficiently
finding all such polynomials $f$ for which $\qf=0$. To this end,
we consider below a related univariate polynomial over a large
extension field $K$ of $\Fq$. 

Before proceeding, it is important to note
that as opposed to \cite{GR08}, where it is eventually
required to solve a polynomial equation over an extension field, here
$K$ is used mainly as a tool for analyzing the list size, and for
easily deriving \emph{linear-algebraic} decoding over $\Fq$-itself, as
in \cite{guruswami-wang}. See more on this in Remark
\ref{rem:linear} below. 

%%%%%%%%%%%%%%%
% PROPOSITION %
%%%%%%%%%%%%%%%
\begin{prop}\label{prop:K}
Suppose that both $m,n\neq 1$. Let $h_i(x):=x^q-\ell_i(x)$, and
$\zeta_i$ be a root of $h_i$ outside $\Fq$, $i=1,2$. Let also
$K=\Fq(\zeta_1,\zeta_2)$ be the splitting field of $h_1h_2$. Then $[K:\Fq(\zeta_1)]=n$, and
$\{\zeta_2^j\}_{j=0}^{n-1}$ is a basis for $K/\Fq(\zeta_1)$. Hence
$\{\zeta_1^i\zeta_2^j\}_{\substack{0\leq i\leq m-1\\ 0\leq j\leq  
n-1}}$ is a basis for $K/\Fq$.
\end{prop}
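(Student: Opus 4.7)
The plan is to combine Proposition~\ref{prop:field} with the standard fact that finite extensions of $\Fq$ of coprime degrees are linearly disjoint. First I would apply Proposition~\ref{prop:field} to each $h_i$ to conclude $[\Fq(\zeta_i):\Fq] = \ord(\ell_i)$, which under our hypotheses gives $[\Fq(\zeta_1):\Fq] = m$ and $[\Fq(\zeta_2):\Fq] = n$. Identifying these with the unique subfields $\Fqm{m}$ and $\Fqm{n}$ of $\overline{\Fq}$, their compositum is $\Fqm{\lcm(m,n)} = \Fqm{mn}$ since $\gcd(m,n) = 1$. I would then note that $K = \Fq(\zeta_1,\zeta_2)$ is precisely this compositum, and it agrees with the splitting field of $h_1 h_2$ because each $h_i$ already splits over $\Fq(\zeta_i)$ by Proposition~\ref{prop:field}. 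Combining these observations gives $[K:\Fq] = mn$, and the tower law then yields $[K:\Fq(\zeta_1)] = n$, establishing the first assertion.

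For the claim that $\{\zeta_2^j\}_{j=0}^{n-1}$ is a basis for $K/\Fq(\zeta_1)$, I would argue as follows. By Proposition~\ref{prop:field}, the minimal polynomial of $\zeta_2$ over $\Fq$ has degree $n$. Since $\Fq \subseteq \Fq(\zeta_1)$, the minimal polynomial of $\zeta_2$ over $\Fq(\zeta_1)$ divides it, hence has degree at most $n$. But this degree equals $[\Fq(\zeta_1)(\zeta_2):\Fq(\zeta_1)] = [K:\Fq(\zeta_1)] = n$, so the two minimal polynomials coincide, and the standard basis associated with the minimal polynomial gives the desired $\{1,\zeta_2,\ldots,\zeta_2^{n-1}\}$.

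The final assertion, that $\{\zeta_1^i\zeta_2^j\}$ is a basis for $K/\Fq$, then follows from the usual tower construction: $\{1,\zeta_1,\ldots,\zeta_1^{m-1}\}$ is a basis for $\Fq(\zeta_1)/\Fq$ (again by Proposition~\ref{prop:field}, since $\zeta_1$ has minimal polynomial of degree $m$ over $\Fq$), and combining it with the basis $\{\zeta_2^j\}$ for $K/\Fq(\zeta_1)$ produces the $mn$ products $\zeta_1^i\zeta_2^j$ as a basis for $K/\Fq$. There is really no substantive obstacle here: once Proposition~\ref{prop:field} supplies the degrees $m$ and $n$, the coprimality assumption trivialises the compositum computation via the well-understood lattice of subfields of $\overline{\Fq}$, and the remaining minimal-polynomial and tower-law arguments are routine.
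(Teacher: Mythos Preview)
Your proposal is correct and follows essentially the same approach as the paper: both arguments invoke Proposition~\ref{prop:field} to obtain $[\Fq(\zeta_i):\Fq]=m,n$, use coprimality of $m,n$ to compute $[K:\Fq]=mn$ (the paper via the explicit tower/divisibility argument $d_2 m=d_1 n$, $d_i\leq m,n$, $\gcd(m,n)=1$; you via the equivalent well-known compositum formula $\Fqm{m}\cdot\Fqm{n}=\Fqm{\lcm(m,n)}$), and then deduce that the minimal polynomial of $\zeta_2$ over $\Fq(\zeta_1)$ has degree $n$, yielding the stated bases.
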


\begin{proof}
By Proposition \ref{prop:field}, $\Fq(\zeta_i)$ is the splitting field
of $h_i$, $i=1,2$, and we have the following diagram of field
extensions and extension degrees:
$$
\begin{gathered}
\xymatrix{
& K=\Fq(\zeta_1,\zeta_2) \ar@{-}[ld]_{d_2}
\ar@{-}[rd]^{d_1}\\
\Fq(\zeta_1)\ar@{-}[dr]_{m} && 
\Fq(\zeta_2)\ar@{-}[dl]^{n}\\ 
 & \Fq
}
\end{gathered}
$$
Since $d_2 m=d_1 n$, $d_1\leq m$, $d_2\leq n$ (as, e.g.,
the minimal polynomial of $\zeta_2$ over $\Fq(\zeta_1)$ divides
that over $\Fq$), and $\gcd(m,n)=1$, it must hold that
$d_1=m$ and $d_2=n$. Hence, $[K:\Fq(\zeta_1)]=n$, 
$\{\zeta_2^j\}_{j=0}^{n-1}$ is a basis 
for $K/\Fq(\zeta_1)$, and $\{\zeta_1^i\zeta_2^j\}_{\substack{0\leq
i\leq m-1\\ 0\leq j\leq n-1}}$ is a basis for $K/\Fq$.  
\end{proof}
%\zac{myabe we should add a remark that the linearized polynomial is
%the same as the linear algebraic approach as it appears in the book by
%sudan et al., however our approach is more abstract and less
%technical, as it does not involve dealing with matrices)} 
%Next, we define the following linearized polynomial.

%%%%%%%%%%%%%%
% DEFINITION %
%%%%%%%%%%%%%%
\begin{defi}
{\rm
Using the terminology of Proposition \ref{prop:K}, let
$$
A(z):= \sum_{i=0}^{w-1}p_i(\zeta_1,\zeta_2)z^{q^i} \in K[z].
$$
}
\end{defi}

Lemma \ref{lemma-2.4} below shows that the decoding problem reduces to
the problem of finding the roots of the \emph{linearized} polynomial
$A(z)$.\footnote{Note the substantial difference in comparison to the
situation in \cite{GR08}: there, there is a need to find the roots of an
\emph{arbitrary} polynomial over an extension field, whereas here, we
need to find the roots of a linearized polynomial, which is nothing
but solving a system of linear equations over $\Fq$ itself, similarly
to \cite{guruswami-wang}. See Remark
\ref{rem:linear} for more details.} In the lemma, we will use  the
following observation, whose omitted proof is by straightforward
induction on the $y$-degree. 

%%%%%%%%%%%%%%%
% ObSERVATION %
%%%%%%%%%%%%%%%
\begin{observation}\label{observ:longdiv}
Let $K$ be a field, let $f(x,y)\in K[x,y]$ and let $g(y)\in
K[y]$ be a non-zero polynomial. Then there exist $q(x,y),r(x,y)\in
K[x,y]$ such that: 1. $f=q(x,y)g(y)+r(x,y)$, 2. $\deg_y(r)<\deg(g)$,
3. $\deg_x(q),\deg_x(r)\leq \deg_x(f)$.  
\end{observation}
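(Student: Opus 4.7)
The plan is standard polynomial long division in the variable $y$, carried out inductively, with the coefficients of $f$ regarded as elements of $K[x]$. Since the leading coefficient of $g(y)\in K[y]$ is a nonzero element of $K$ (hence a unit in $K[x]$), division by $g$ in $y$ works over the coefficient ring $K[x]$ just as over a field, which immediately gives existence of $q,r$ satisfying conditions (1) and (2). The real content of the observation is condition (3), which tracks how $\deg_x$ behaves along the process; this is where induction on $d := \deg_y(f)$ must be done explicitly.

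For the base case $d < \deg(g) =: e$, I would set $q = 0$ and $r = f$; all three conditions hold trivially. For the inductive step with $d \geq e$, write $f(x,y) = \sum_{i=0}^{d} f_i(x) y^i$ with $f_d \not\equiv 0$ and $g(y) = \sum_{j=0}^{e} g_j y^j$ with $g_e \neq 0$, and consider
\[
\tilde{f}(x,y) := f(x,y) - f_d(x) g_e^{-1} y^{d-e} g(y).
\]
The $y^d$-term of $f$ cancels by construction, so $\deg_y(\tilde{f}) < d$. The subtracted term has $x$-degree equal to $\deg(f_d) \leq \deg_x(f)$, so $\deg_x(\tilde{f}) \leq \deg_x(f)$. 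Applying the induction hypothesis to $\tilde{f}$ yields $\tilde{q}, \tilde{r} \in K[x,y]$ with $\tilde{f} = \tilde{q}(x,y)\, g(y) + \tilde{r}(x,y)$, $\deg_y(\tilde{r}) < e$, and $\deg_x(\tilde{q}), \deg_x(\tilde{r}) \leq \deg_x(\tilde{f}) \leq \deg_x(f)$.

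Setting $q(x,y) := \tilde{q}(x,y) + f_d(x) g_e^{-1} y^{d-e}$ and $r(x,y) := \tilde{r}(x,y)$ then gives the required decomposition $f = q\, g + r$, with
\[
\deg_x(q) \leq \max\bigl(\deg_x(\tilde{q}),\, \deg(f_d)\bigr) \leq \deg_x(f)
\]
and $\deg_x(r) = \deg_x(\tilde{r}) \leq \deg_x(f)$. I do not expect a genuine obstacle here; the authors' remark that the proof is straightforward is accurate. The only point requiring attention is the observation that the multiplier $f_d(x) g_e^{-1} y^{d-e}$ used at each reduction step carries $x$-degree exactly $\deg(f_d) \leq \deg_x(f)$, which is what keeps the $x$-degree from growing along the recursion and secures condition (3).
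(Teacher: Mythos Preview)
Your proof is correct and follows exactly the approach the paper indicates: the paper omits the proof entirely, noting only that it is ``by straightforward induction on the $y$-degree,'' which is precisely the argument you carry out. Your explicit tracking of $\deg_x$ through each reduction step correctly justifies condition (3).
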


%%%%%%%%%
% LEMMA %
%%%%%%%%%
\begin{lemma}\label{lemma-2.4} The polynomial $A(z)$ satisfies the following properties. 
\begin{enumerate}

\item $A(z)$ is not the zero polynomial.

\item If $f(x,y)$ is such that $\qf(x,y)=0$, then
$A\big(f(\zeta_1,\zeta_2)\big)=0$. 

\end{enumerate}
\end{lemma}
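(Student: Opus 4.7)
The plan is to use two facts in tandem: the iterated Frobenius identity $\ell_j^i(\zeta_j)=\zeta_j^{q^i}$, and the basis property from Proposition \ref{prop:K} that $\{\zeta_1^a\zeta_2^b\}_{0\leq a<m,\,0\leq b<n}$ is an $\Fq$-basis of $K$. The first handles part 2 by direct substitution; the second handles part 1 after a $y$-degree reduction modulo the minimal polynomial of $\zeta_2$.

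For part 2, I would first note that $\ell_j^i(\zeta_j)=\zeta_j^{q^i}$ for $j=1,2$ follows by induction on $i$: the base case $\zeta_j^q=\ell_j(\zeta_j)$ is the defining equation of $\zeta_j$, and the inductive step uses that $\ell_j$ has $\Fq$-coefficients, hence commutes with the Frobenius $\xi\mapsto\xi^q$. Since $f\in\Fq[x,y]$, Frobenius likewise commutes with its evaluation, giving $f(\zeta_1^{q^i},\zeta_2^{q^i})=f(\zeta_1,\zeta_2)^{q^i}$. Substituting $(x,y)=(\zeta_1,\zeta_2)$ into the polynomial identity $\qf(x,y)=0$ and applying these two observations then yields
\[
0 \;=\; \sum_{i=0}^{w-1} p_i(\zeta_1,\zeta_2)\, f(\zeta_1,\zeta_2)^{q^i} \;=\; A\bigl(f(\zeta_1,\zeta_2)\bigr),
\]
which is exactly what is claimed.

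For part 1, I would argue by contradiction: suppose $p_i(\zeta_1,\zeta_2)=0$ for every $i$. The obstacle is that the degree bound $\deg_y(p_i)\leq D-t$ may exceed $n$, so the basis property cannot be applied directly. To sidestep this, let $p_{\zeta_2}(y)\in\Fq[y]$ be the minimal polynomial of $\zeta_2$; by Proposition \ref{prop:field}, it is irreducible of degree $n$ and divides $y^q-\ell_2(y)$. Using Observation \ref{observ:longdiv}, I would divide each $p_i$ by $p_{\zeta_2}(y)$ in $y$ to obtain $p_i(x,y)=q_i(x,y)\,p_{\zeta_2}(y)+\tilde p_i(x,y)$ with $\deg_y(\tilde p_i)<n$ and $\deg_x(\tilde p_i)\leq m-s-w<m$. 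Then $\tilde p_i(\zeta_1,\zeta_2)=p_i(\zeta_1,\zeta_2)=0$, and the basis property forces $\tilde p_i\equiv 0$. Hence $p_{\zeta_2}(y)$ divides every $p_i(x,y)$, and therefore divides $Q$ in $\Fq[x,y,z_0,\ldots,z_{w-1}]$, contradicting the coprimality of $Q$ with $y^q-\ell_2(y)$ guaranteed by the preceding interpolation lemma. The main obstacle is precisely this $y$-degree mismatch, and it is resolved cleanly by the long-division observation.
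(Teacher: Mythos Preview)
Your proof is correct and follows essentially the same approach as the paper: part 2 is identical, and for part 1 both arguments reduce $p_i$ modulo $p_{\zeta_2}(y)$ via Observation \ref{observ:longdiv} and then use the degree bounds $\deg_x<m$, $\deg_y<n$ together with the linear independence of $\{\zeta_1^a\zeta_2^b\}$ to force the remainder to vanish, contradicting $\gcd(Q,y^q-\ell_2(y))=1$. Your version is slightly more streamlined in that you invoke the basis statement of Proposition \ref{prop:K} directly, whereas the paper reproves the needed linear-independence inline, but the substance is the same.
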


\begin{proof}
1. Using the terminology of Proposition \ref{prop:K}, let
$p_{\zeta_2}(y)$ be the minimal 
polynomial of $\zeta_2$ over $\Fq$, which by the same proposition is
also the minimal polynomial of $\zeta_2$ over $\Fq(\zeta_1)$, and
recall that $p_{\zeta_2}(y)$ is a factor of $y^q-\ell_2(y)$. Let $i\in \{0,\ldots,w-1\}$ be such 
that 
\begin{equation}
\label{stam-equation}
     p_{\zeta_2}(y)\nmid p_{i}(x,y) \text{ (in particular, $p_{i}(x,y)\neq 0$)},
\end{equation} 
which exists by the assumption that $\gcd(Q,y^q-\ell_2(y))=1$. 

It is sufficient to prove that
$p_{i}(\zeta_1,\zeta_2)\neq 0$, equivalently,  it is sufficient to show
that $p_{i}(\zeta_1,y)$ is not divisible by $p_{\zeta_2}(y)$ in
$\Fq(\zeta_1)[y]$. Assume towards a contradiction that 
\begin{equation}
    \label{stam-equation2}
    p_{i}(\zeta_1,y)=p_{\zeta_2}(y)u(\zeta_1,y)
\end{equation} for some 
$u\in \Fq[x,y]$ with $\deg_x(u)<m$ (recall that
$[\Fq(\zeta_1):\Fq]=m$). We will show that
$p_{i}(x,y)=p_{\zeta_2}(y)u(x,y)$, a contradiction to \eqref{stam-equation}.

By Observation \ref{observ:longdiv}, write
$$
p_{i}(x,y)=p_{\zeta_2}(y)u_1(x,y)+r(x,y),
$$ 
where  
$\deg_x(u_1),\deg_x(r)\leq \deg_x(p_{i})<m$ (by the degree
assumption \eqref{polynomial-form} on $Q$), and  $\deg_y(r)<\deg(p_{\zeta_2}).$
Then, together with \eqref{stam-equation2}, 
$$p_{\zeta_2}(y)u_1(\zeta_1,y)+r(\zeta_1,y)=p_{\zeta_2}(y)u(\zeta_1,y),$$
that is,
$p_{\zeta_2}(y)(u(\zeta_1,y)-u_1(\zeta_1,y))=r(\zeta_1,y)$. This
implies that  $r(\zeta_1,y)=0$ as the $y$-degree of the right-hand
side is less than the degree of $p_{\zeta_2}$. Moreover, 
$\deg_x(r(x,y))<m=[\Fq(\zeta_1):\Fq]$, and
hence if we write $r(x,y)=\sum_i a_i(x)y^i$, $a_i(\zeta_1)\neq 0$ for
any $i$ with $a_i(x)\neq 0$. As $r(\zeta_1,y)=0$, we therefore
must have $r(x,y)\equiv 0$, and a similar argument shows also that
$u(x,y)=u_1(x,y)$, i.e., $p_{i}(x,y)=p_{\zeta_2}(y)u(x,y)$, 
and we arrive at a contradiction. 

2. Assume that $\qf(x,y)=0$. Then
\begin{multline*}
0=\qf(\zeta_1,\zeta_2) = \sum_i p_i(\zeta_1,\zeta_2)
f\big(\ell_1^i(\zeta_1),\ell_2^i(\zeta_2)\big) \\  
 = \sum_i p_i(\zeta_1,\zeta_2)
f(\zeta_1^{q^i},\zeta_2^{q^i} \big)=\sum_i p_i(\zeta_1,\zeta_2)
f(\zeta_1,\zeta_2)^{q^i}=A\big(f(\zeta_1,\zeta_2)\big).
\end{multline*}

\end{proof}
Note that for $f(x,y)\in \Fq^{s,t}[x,y]$, the polynomial $f(x,y)$ is
determined from $f(\zeta_1,\zeta_2)$, considering the basis of
Proposition \ref{prop:K}.

By combining the above results, we get the following theorem.
%\yaron{I've kept the name ``main'' in this theorem, but shouldn't the
%main theorem actually be the one in the following section, with
%concrete parameters?} 

%%%%%%%%%%%
% THEOREM %
%%%%%%%%%%%
\begin{thm}
\label{main-thm}
For every $1\leq w\leq m-s$, the permuted product code $\clong(s,t)$
satisfies that for every received word $r\in \Fq^{m\times n}$, a
subspace $W\subseteq \Fq[x,y]$ of dimension at most $w-1$ can be
found in time $\poly(\log q,m,n)$, such that every $f\in \Fq^{s,t}[x,y]$ whose
encoding \eqref{encoding} agrees with  $r$ in at least  
$n\Big(\frac{m}{w(m-s-w+1)}+\frac{t}{n} \Big)$ coordinates belongs to $W$.
\end{thm}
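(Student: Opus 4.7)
The plan is to chain together the three lemmas already established. First, I would invoke the interpolation lemma to produce, in $O((nm)^3)$ field operations, a nonzero polynomial $Q$ of the form \eqref{polynomial-form} that satisfies the interpolation constraints \eqref{eq-2} and is coprime to $y^q-\ell_2(y)$. Lemma \ref{lemma:fzeroQ} then guarantees that any $f\in\Fq^{s,t}[x,y]$ whose encoding agrees with $r$ on at least $n\big(\frac{m}{w(m-s-w+1)}+\frac{t}{n}\big)$ coordinates satisfies $\qf\equiv 0$, and Lemma \ref{lemma-2.4} forces $A(f(\zeta_1,\zeta_2))=0$ in $K$, with $A$ nonzero.

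The crucial structural point is that $A(z)=\sum_{i=0}^{w-1}p_i(\zeta_1,\zeta_2)\,z^{q^i}$ is a nonzero $\Fq$-linearized polynomial over $K$ of $q$-degree at most $w-1$, so its zero set in $K$ is an $\Fq$-subspace of dimension at most $w-1$. By Proposition \ref{prop:K}, $\{\zeta_1^i\zeta_2^j\}_{0\leq i\leq m-1,\,0\leq j\leq n-1}$ is an $\Fq$-basis for $K$, so in particular the $st$ monomials with $0\leq i<s$ and $0\leq j<t$ are $\Fq$-linearly independent; hence the evaluation map $\Phi\colon \Fq^{s,t}[x,y]\to K$ sending $f$ to $f(\zeta_1,\zeta_2)$ is $\Fq$-linear and injective. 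Setting $W:=\Phi^{-1}(\ker A)$ then yields an $\Fq$-subspace of $\Fq^{s,t}[x,y]$ of dimension at most $w-1$ containing every candidate $f$, which is precisely the required list.

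To make the algorithm explicit and polynomial-time, I would observe that the composition $f\mapsto A(\Phi(f))$ is an $\Fq$-linear map from $\Fq^{s,t}[x,y]$ to $K\cong \Fq^{mn}$, so computing $W$ reduces to solving a homogeneous linear system of size at most $mn\times st$ over $\Fq$. The main (mild) obstacle is the bookkeeping in $K$: choosing an explicit representation of $K$ via the minimal polynomials of $\zeta_1,\zeta_2$ supplied by Proposition \ref{prop:field}, and computing the Frobenius images $\zeta_1^{q^i},\zeta_2^{q^i}$ together with their products in the basis $\{\zeta_1^a\zeta_2^b\}$. All of this is standard finite-field arithmetic doable in $\poly(\log q,m,n)$ time, so no substantive new idea is needed beyond what the preceding lemmas already provide; the decoder is genuinely linear-algebraic in the style of Guruswami--Wang, despite the underlying analysis passing through the extension field $K$.
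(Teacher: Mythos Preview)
Your proposal is correct and follows essentially the same route as the paper: the paper states the theorem as an immediate consequence of the preceding lemmas (``By combining the above results, we get the following theorem'') and then, in Remark~\ref{rem:linear}, spells out exactly your linear-algebraic reduction---that $A(z)$ is $\Fq$-linearized, so finding its roots in $K\cong\Fq^{mn}$ amounts to solving a homogeneous $\Fq$-linear system of size $\poly(m,n)$. Your write-up is slightly more explicit in identifying $W=\Phi^{-1}(\ker A)$ via the injectivity of $\Phi$ coming from Proposition~\ref{prop:K}, but the argument is the same.
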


Note that while the theorem as stated only guarantees that the list
size does not exceed $q^{w-1}$, a general result of Kopparty {\it et
al.} \cite[Lemma 3.1]{kopparty-improved-list-size} and its recent improvement
in \cite{tamo2023tighter} can be used to move to a list size 
that does not depend on $q$. We will elaborate on this in the
following section, where we will consider a concrete choice of the
parameters for decoding up to the list decoding capacity.

%%%%%%%%%%
% REMARK %
%%%%%%%%%%
\begin{remark}\label{rem:linear}
{\rm
As an $\Fq$-vector space, $K\simeq \Fq^{mn}$. Fixing a basis (say, the basis of
Proposition \ref{prop:K}), there is some matrix $M\in\Fq^{mn\times mn}$
such that the equation $A(\zeta)=0$ takes the form $M\cdot
x^T=0$ for $x\in \Fq^{mn}$, since $A(z)$ is linearized. The
coefficients of the matrix $M$ are fixed functions of the coefficients
of the $p_i(x,y)$, similarly to the situation in
\cite{guruswami-wang}. So, although an extension field was
used as a convenient tool 
in the above derivation, this extension field does not participate in
the decoding process, and the decoder is ``linear algebraic,'' as in
\cite{guruswami-wang}. While it is perhaps possible to reach Theorem \ref{main-thm}
by constructing an appropriate triangular matrix (as in
\cite{guruswami-wang}), we have found it more convenient to
use the algebraic method described above. 
}
\end{remark}

%%%%%%%%%%%%%%%%%%%%%%%%%%%%%%%
\subsection{Code instantiation}\label{sec:inst}
%%%%%%%%%%%%%%%%%%%%%%%%%%%%%%%
In this section, we describe the selection process of the two affine
polynomials to maximize the code length for a given field
size. Subsequently, we present the criteria for parameter selection
that enables the obtained code to achieve list decoding capacity. 

We begin with the selection of the affine polynomials. By
\eqref{eq:ord}, the largest possible length $n$ is $q-1$, and then $m$
must be taken as $p$, for satisfying $\gcd(m,n)=1$ with a non-trivial
$\ell_1$. Explicitly, this 
can be achieved by setting $\ell_1(x):=x+1$ and $\ell_2(x):=\gamma x$
for a primitive $\gamma\in \Fq^*$. Now, we may take, say, $\alpha:=0$,
$\beta:=1$ for the respective non-fixed points of $\ell_1,\ell_2$.

Next, we consider parameters selection for achieving list decoding
capacity. Fix $\veps>0$. Take $m=p\approx \frac{1}{\veps^3}$,
$w\approx\frac{1}{2}\cdot\frac{1}{\veps^2}$, and $s$ such that
$m-s\approx\frac{1}{\veps^2}$ (which is indeed $\geq w$). Recall that
the normalized number of 
required agreement columns is at least $\frac{m}{w(m-s-w+1)}+\frac{t}{n}$. Let
us consider each summand 
separately. First,
$$
\frac{m}{w(m-s-w+1)} < \frac{m}{w(m-s-w)} \approx
\frac{\frac{1}{\veps^3}}
{\frac{1}{2}\cdot \frac{1}{\veps^2}\cdot \frac{1}{2}\cdot
\frac{1}{\veps^2}} = O(\veps) 
$$
Also,
$$
\frac{t}{n}=\frac{ts}{mn}\frac{m}{s} =R\frac{m}{s}\approx
R\frac{\frac{1}{\veps^3}}
{\frac{1}{\veps^3}-\frac{1}{\veps^2}}=R\frac{1}{1-\veps} \leq
R(1+2\veps)=R+O(\veps), 
$$
where the last inequality is for $\veps\leq \frac{1}{2}$. 

Summarizing the above, we may now prove our main theorem.
%%%%%%%%%%%
% THEOREM %
%%%%%%%%%%%
\begin{thm}(Main)\label{thm:mainx}
For $R\in(0,1)$, small enough $\veps>0$, and all powers $q$ of a prime
$p=O(1/\veps^3)$, there are instances of the permuted product code
over $\Fq$ with alphabet size $q^{O(1/\veps^3)}$,  rate $R$
and block length $q-1$, that are list-decodable from error
fraction $1-R-\veps$, with an output list of size
$L=(1/\veps)^{O(1/\veps^2)}$ by a randomized algorithm that outputs the list with probability at least $1-\alpha$ in time $\poly( \log \left( \frac{1}{1-\alpha}\right), 1/\varepsilon, q, L)$.
\end{thm}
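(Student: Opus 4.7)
The proof is essentially a bookkeeping step that assembles the parameter selection computed in the paragraphs immediately preceding the theorem, combined with Theorem \ref{main-thm} and the list-size reduction of \cite{tamo2023tighter}. The plan is as follows.

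First, I would explicitly instantiate the construction: take $\ell_1(x):=x+1$ and $\ell_2(x):=\gamma x$ for a primitive $\gamma\in\Fq^*$, with $\alpha:=0$ and $\beta:=1$. By \eqref{eq:ord} this yields orders $m=p$ and $n=q-1$, which are coprime since $p\nmid q-1$, so the code $\clong(s,t)$ is well-defined and (by Corollary \ref{cor:params}) has block length $q-1$ and rate $R=st/(mn)$. I would then pick parameters as in the paragraph above the theorem: choose the prime $p$ with $p=\Theta(1/\veps^3)$, choose $s$ with $m-s=\lceil 1/\veps^2\rceil$, choose $w=\lceil 1/(2\veps^2)\rceil$ (so that $w\le m-s$, which makes the interpolation step valid), and choose $t\le n$ so that the rate $st/(mn)$ equals $R$ (rounding issues at the boundary introduce only lower-order corrections and can be absorbed into $\veps$). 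The alphabet size is $q^m = q^p = q^{O(1/\veps^3)}$, as required.

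Next, I would verify the agreement/error bound. The explicit estimates right before the theorem statement give
\[
\frac{m}{w(m-s-w+1)}+\frac{t}{n} \;\le\; R + O(\veps),
\]
so, rescaling $\veps$ by a constant at the outset, the right-hand side is at most $R+\veps$. By Theorem \ref{main-thm}, applied with this $w$, there is a deterministic algorithm running in $\poly(\log q,m,n)=\poly(q)$ time that, given any received word, outputs a subspace $W\subseteq \Fq[x,y]$ of dimension at most $w-1=O(1/\veps^2)$ containing every message $f\in\Fq^{s,t}[x,y]$ whose encoding agrees with the received word on at least an $(R+\veps)$-fraction of the coordinates. Equivalently, every codeword within relative Hamming distance $1-R-\veps$ from the received word corresponds to some $f\in W$.

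The final ingredient is reducing the a priori exponential bound $|W|=q^{O(1/\veps^2)}$ to a constant list size and extracting the actual list efficiently. For this I would invoke \cite[Lemma 3.1]{tamo2023tighter} (the refinement of \cite[Lemma 3.1]{kopparty-improved-list-size}): since the permuted product code is a subcode of an interleaved RS code over $\Fq$ in which the list of candidate messages is contained in an affine subspace of dimension $O(1/\veps^2)$, that lemma produces a randomized algorithm which, after intersecting $W$ with a random structured set, isolates all actual codewords in the Hamming ball, with the stated success probability $1-\alpha$ and running time $\poly(\log(1/(1-\alpha)),1/\veps,q,L)$, and yields the bound $L=(1/\veps)^{O(1/\veps^2)}$.

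The main (essentially only) non-routine part is tracking how the parameter choices above feed into the list-size lemma of \cite{tamo2023tighter}: one needs the subspace $W$ to have the correct structure (the coefficients of a polynomial of bounded bidegree, so linear dependence relations coming from $A(z)$ translate into linearized-polynomial constraints over the extension field $K$ of Proposition \ref{prop:K}), after which the argument is black-box. Everything else is a routine substitution of the parameters $p=\Theta(1/\veps^3)$, $w=\Theta(1/\veps^2)$, and $m-s=\Theta(1/\veps^2)$ into the bounds already established.
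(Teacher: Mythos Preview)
Your proposal is correct and follows essentially the same approach as the paper: instantiate the parameters as in Section~\ref{sec:inst}, invoke Theorem~\ref{main-thm} to obtain a containing subspace of dimension $O(1/\veps^2)$, and then apply \cite[Lemma~3.1]{tamo2023tighter} (stated in the paper as Lemma~\ref{main-lemma}) to bound the list size and obtain the randomized running time. Your final paragraph slightly overcomplicates matters---Lemma~\ref{main-lemma} applies black-box to any linear code whose output list is contained in a low-dimensional subspace, so no additional ``correct structure'' of $W$ beyond what Theorem~\ref{main-thm} already provides is needed.
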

Before we proceed with the proof of the theorem we will need the
following result of \cite{tamo2023tighter} specialized to our case of
list decoding. 

\begin{lemma}\cite[Lemma 3.1]{tamo2023tighter}
\label{main-lemma}
Let $\cC \subseteq (\Fq^m)^n$ be a linear code with relative  minimum
distance $\delta>0$ that is $(\delta-\varepsilon,L)$-list
decodable. Assume further that the output list size is contained in
subspace $V\subseteq \cC$ of dimension at most $r$, then the output list
size   
\begin{equation}
\label{eq:main-lemma}
    L\leq \frac{1}{\varepsilon^r}.
\end{equation}

Moreover, there is a randomized algorithm that, given a basis for $V$, with probability at least $1-\alpha$ list decodes $\cC$ with the above parameters in time $\poly(\log q, \log \left( \frac{1}{1-\alpha}\right), m, n, L)$. 
\end{lemma}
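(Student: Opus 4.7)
The plan is to prove the lemma in two parts: first the combinatorial list-size bound $L \leq 1/\varepsilon^{r}$, and then the randomized decoding algorithm that outputs the list in the claimed time. Both arguments rely critically on the hypothesis that the list $\mathcal{L}$ of codewords within relative distance $\delta - \varepsilon$ of the received word $y$ sits inside the subspace $V\subseteq\cC$ of dimension at most $r$.

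For the combinatorial bound, I would induct on $r$. The base case $r = 0$ forces $V = \{0\}$, so $L \leq 1 = (1/\varepsilon)^{0}$. For the inductive step, the goal is to locate a nonzero linear functional $\phi : V \to \Fq$ whose image $\phi(\mathcal{L})$ contains at most $1/\varepsilon$ distinct values. Given such a $\phi$, each fiber $\phi^{-1}(\lambda) \cap \mathcal{L}$ lies in an affine translate $c_\lambda + V_0$ of the kernel $V_0 := \ker\phi$, which has dimension $r - 1$ and inherits relative minimum distance $\geq \delta$ from $V$. Translating by $c_\lambda$, the fiber corresponds to a list of codewords in $V_0$ within radius $(\delta - \varepsilon)n$ of the shifted word $y - c_\lambda$, and the inductive hypothesis bounds each such fiber by $(1/\varepsilon)^{r-1}$. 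Summing over the $\leq 1/\varepsilon$ occupied fibers gives $L \leq (1/\varepsilon)^{r}$.

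For the randomized algorithm, given a basis of $V$, the idea is random coordinate sampling followed by linear-algebraic lifting. Sample a random subset $T \subseteq [n]$ of $r$ coordinates (slight oversampling if needed). With constant probability the restriction $\pi_T : V \to \Fq^{T}$, $v \mapsto v|_T$, is an isomorphism, and whenever it is, any $c \in \mathcal{L}$ satisfying $c|_T = y|_T$ can be recovered uniquely by solving an $r \times r$ linear system over $\Fq$. Since each $c \in \mathcal{L}$ agrees with $y$ on $\geq (1 - \delta + \varepsilon)n$ coordinates, the probability over random $T$ that a fixed $c$ is recovered is at least $\binom{(1-\delta+\varepsilon)n}{r}/\binom{n}{r} \gtrsim (1 - \delta + \varepsilon)^{r}$. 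Repeating the sampling $\Theta\bigl(L\cdot(1-\delta+\varepsilon)^{-r}\log(L/\alpha)\bigr)$ times and applying a union bound ensures every codeword of $\mathcal{L}$ is hit by at least one successful iteration with probability $\geq 1 - \alpha$; each recovered candidate is then verified in $O(mn)$ time against the received word, for a total runtime $\poly(\log q, \log(1/(1-\alpha)), m, n, L)$.

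The hard part will be the ``at most $1/\varepsilon$ occupied fibers'' claim that powers the induction. This does not follow from elementary Johnson-type inequalities, since a single coordinate functional may produce many non-empty fibers even when most codewords agree with $y$ at that coordinate. The cleanest route is to choose $\phi$ in a data-dependent way using the received word and the error pattern, e.g., a functional obtained from a dual-code evaluation or a carefully weighted combination of coordinate projections, and then control the image size via a weighted averaging argument exploiting the min-distance condition inside $V$; this is the technical heart of the argument in \cite{tamo2023tighter}. Secondary obstacles are handling degenerate random samples $T$ for which $\pi_T|_V$ is not injective (resolved by oversampling $T$ or retrying until a full-rank sample appears) and verifying that all per-iteration linear-algebra costs are indeed polynomial in $\log q$, $m$, $n$, and $L$.
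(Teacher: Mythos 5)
The paper itself offers no proof of this lemma --- it is imported verbatim from \cite[Lemma 3.1]{tamo2023tighter} --- so your attempt must stand on its own, and it does not: the step you yourself flag as ``the technical heart'' is genuinely missing, and it is also framed incorrectly. You do not need a functional $\phi$ with at most $1/\varepsilon$ \emph{occupied fibers} (which, as you concede, you cannot produce); you need a single \emph{large} fiber, and producing it is exactly where the minimum-distance hypothesis --- which your sketch never actually uses --- enters. Concretely: let $\mathcal{L}$ be the list, contained in a coset of a linear space $V'\subseteq \cC$ of dimension $r\ge 1$, and let $I\subseteq\{0,\dots,n-1\}$ be the set of coordinates at which the map $v\mapsto v_i$ is not identically zero on $V'$. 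Since every nonzero $v\in V'\subseteq\cC$ has weight at least $\delta n$, we get $|I|\ge\delta n$. Each $c\in\mathcal{L}$ disagrees with the received word $y$ in at most $(\delta-\varepsilon)n$ positions, hence agrees with $y$ on at least $|I|-(\delta-\varepsilon)n\ge\varepsilon n$ positions of $I$. Averaging over $i\in I$ yields a coordinate $i^*$ with $|\{c\in\mathcal{L}:c_{i^*}=y_{i^*}\}|\ge \varepsilon n L/|I|\ge\varepsilon L$; this sub-list lies in an affine space of dimension at most $r-1$ (the coordinate map is nonzero on $V'$ at $i^*$), so the inductive hypothesis gives $\varepsilon L\le(1/\varepsilon)^{r-1}$, i.e.\ $L\le(1/\varepsilon)^{r}$. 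Without this (or an equivalent) argument, your proof of the combinatorial bound is a restatement of the goal.

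The algorithmic half has a parallel gap. The assertion that a random $r$-subset $T$ makes $\pi_T\colon V\to(\Fq^m)^{T}$ injective ``with constant probability'' is unsubstantiated --- a union bound over the $q^{r}$ vectors of $V$ gives only $q^{r}(1-\delta)^{r}$, which need not be small --- and you never analyze the joint event that a fixed $c$ agrees with $y$ on $T$ \emph{and} is the unique solution of $v|_T=y|_T$ in the coset. The argument that works (and is the one behind \cite{kopparty-improved-list-size} and \cite{tamo2023tighter}) is sequential rather than batched: in each of at most $r$ rounds, draw one random coordinate from the support set $I$ of the \emph{current} linear part and impose the constraint $v_i=y_i$. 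This guarantees the dimension drops by at least one per round, and the same counting as above shows a fixed list element survives a round with probability at least $\varepsilon$, hence survives all rounds with probability at least $\varepsilon^{r}$; sufficiently many independent repetitions followed by verification then recover the whole list with the claimed probability and runtime. In short, both halves of your proposal have the correct scaffolding, but the load-bearing step in each --- the one that uses the distance of $\cC$ --- is absent.
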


\begin{proof}[Proof of Theorem \ref{thm:mainx}]
The only part that still requires proof is the assertion regarding
the list size and the running time of the overall algorithm. By Theorem \ref{main-thm} the algorithm outputs an
$\Fq$-subspace of the permuted product code of dimension at most
$O\big( 1/\varepsilon^2\big)$ and therefore, by Lemma \ref{main-lemma}
it follows that the list is of size at most
$(1/\veps)^{O(1/\veps^2)}$. The running time follows by the running times of the deterministic algorithm in Theorem \ref{main-thm} and the randomized algorithm in Lemma \ref{main-lemma}.  
%and this follows from \cite[Lemma 3.1]{tamo2023tighter},
%given that the list is contained in an $\Fq$-subspace of dimension
%$O(1/\veps^2)$, by Theorem \ref{main-thm}.
\end{proof}

%%%%%%%%%%%%%%%%%%%%%%%%%%%%%%%%%%%%%%%%
\section{Open questions}\label{sec:conc}
%%%%%%%%%%%%%%%%%%%%%%%%%%%%%%%%%%%%%%%%
We conclude the paper with some open questions for future
research.

\begin{enumerate}

\item{{\bf More than two affine polynomials.}}
In this work, we considered codes achieving list-decoding capacity
constructed by two affine polynomials $\ell_1, \ell_2 \in \ga(q)$ of
coprime orders. This raises the question of potential benefits from
employing a larger number of affine polynomials. In particular, is it possible to construct capacity achieving codes of longer length than $q-1$, using more than two affine polynomials? Note that  if
$r\geq 2$ affine polynomials $\ell_1, \ldots, \ell_r$ are used, the
number of distinct vectors $(\ell_1^i, \ldots, \ell_r^i) \in
\ga(q)^r$, as $i$ varies over $\bbN$, is  $\lcm(\ord(\ell_1), \ldots,
\ord(\ell_r)) \leq p(q-1)$. This inequality is a consequence of
\eqref{eq:ord}, indicating that the maximum number of evaluation
points does not increase beyond that achievable with two affine
polynomials, as shown in this paper. 

However, this does not rule out the possibility of having a longer
code. For example, if $q-1=ab$ for coprime integers $a,b>1$
with $a<p$, we may take $\ell_1,\ell_2,\ell_3\in \ga(q)$ with orders
$a,p,b$ (resp.) and construct codewords with column
length $a$, similarly to \eqref{encoding}. The resulting code length will
therefore be $p b>q-1$.
%Can this idea be used to construct longer capacity achieving codes? 
%for a different arrangement of
%evaluations. In such a case, the code length might not be constrained
%to the order of a \emph{single} affine polynomial, but rather could be
%the lcm of several orders. This leads to the question: Can this
%approach be utilized to construct longer codes? 
%\yaron{I have deleted (commented out) the part asking for the
%possibility of making the characteristic independent from the column
%length, because I now believe that this is not possible: If the total
%number of evaluation points is at most $p(q-1)$ and we want to achieve
%a length larger than $q-1$, then the column length $m$ must be
%smaller than $p$. So, we must have $p>m>$some large function of
%$\veps$. What still appears to be plausible is to increase the %length
%for a similar $q$\dots I have stated this by a simple example, %which
%appears to better than stating the most general case with lcm's
%etc. However, I tend 
%to believe that we should simply omit all the ``open questions''
%section, because we haven't thought carefully and longly enough %about
%this question.}  
%More specifically, is
%it possible to design the column length $m$ as a function of $\veps$,
%the distance to capacity, independent of the characteristic $p$,
%similar to the scenario with folded RS codes? Particularly, is it
%possible to attain capacity-achieving codes of length
%$p(q-1)g(\veps)$, for some polynomial $g$, using this approach? 

\item{{\bf Using AG codes.}}
Similarly to \cite{GX13}, is it possible to extend 
%Another interesting question is regarding the potential extension of
the current results to the setup of AG codes  in order to reduce the alphabet size? 

\item{{\bf Efficient encoding.}}
Can the product structure of the construction be used for 
efficient encoding? In particular, since the horizontal code is
defined over the entire multiplicative group of $\Fq$, can this be
used for some fast evaluation algorithm? 
\end{enumerate}

\bibliographystyle{alpha}
	\bibliography{bibliography}

%%%%%%%%%%%%
% APPENDIX %
%%%%%%%%%%%%
\appendix

%%%%%%%%%%%%%%%%%%%%%%%%%%%%%%%%%%%%%%%%%%
\section{Properties of the unfolded code}
%%%%%%%%%%%%%%%%%%%%%%%%%%%%%%%%%%%%%%%%%%
In this appendix, we consider the properties of the
``unfolded'' code $\uclong(s,t)\subseteq \Fq^{mn}$, whose codewords
are the evaluation vectors
$$
\big(f(\ell_1^0(\alpha),\ell_2^0(\beta)),f(\ell_1^{1}(\alpha),\ell_2^{1}(\beta)),\ldots,
f(\ell_1^{mn-1}(\alpha),\ell_2^{mn-1}(\beta))\big),
$$ 
for all $f\in\Fq^{s,t}[x,y]$.  In particular, we show that the code is
cyclic, and we find its generator polynomial. 

It is interesting to note that while for folded RS codes,
the unfolded code is an MDS code, $\uclong(s,t)$  
is far from MDS, as its minimum distance equals $(m-s+1)(n-t+1)$
by Proposition \ref{prop:prod}. 
On the other hand, similarly to the case for folded RS codes,
$\uclong(s,t)$ is cyclic, since cyclically shifting the codeword
corresponding to $f(x,y)\in\Fq^{s,t}[x,y]$ results in the codeword
corresponding to $f(\ell_1(x),\ell_2(y))\in\Fq^{s,t}[x,y]$. It is
therefore natural to ask what is the {\it generator polynomial}
of $\uclong(s,t)$ as a cyclic code.\footnote{We will assume some
background on cyclic codes, as appearing, e.g., in 
\cite[Ch.~7]{bok:MW}, or
\cite[Ch.~8]{Roth-Ron-book}. Recall that considering cyclic codes of length $N'$
as ideals in $\Fq[x]/(x^{N'}-1)$ is valid for any $N'$, not
necessarily coprime to the characteristic. By correspondence of
ideals of $\Fq[x]$ and those of $F[x]/(x^{N'}-1)$, any 
ideal of the quotient , i.e., any cyclic code of length $N'$,
is the image of an ideal of $\Fq[x]$ containing $(x^{N'}-1)$, that
is, the image of $(g(x))$ for some $g(x)$ dividing $x^{N'}-1$. The
unique monic such $g(x)$ is called the generator polynomial of the
code. The {\it check-polynomial} is $h(x):=(x^{N'}-1)/g(x)$. It is shown,
e.g., \cite[Theorem. 7.5.4, p.~196]{bok:MW}, that the generator polynomial of the
dual code is the ``reversed $h$,'' that is, $h(0)^{-1}\cdot
x^{\deg(h)}h(x^{-1})$, and the proof remains valid when $N'$ is not coprime
to $q$.} 

For the choice of $\ell_1(x)=x+1,\ell_2(x)=\gamma x,\alpha=0,\beta=1$ from
Section \ref{sec:inst}, we answer this question in Proposition 
\ref{prop:gen} below. Note that the length $N:=mn=p(q-1)$ is
not coprime to $q$, and in general the code is a {\it repeated-root
cyclic code}, see e.g., \cite{Castagnoli}.

It will be useful to note that with the above choice of
$\ell_1,\ell_2,\alpha,\beta$, it holds that for all $f\in \Fq[x,y]$
and $\nu\in\{0,\ldots,N-1\}$, 
\begin{equation}\label{eq:eval}
f(\ell_1^{\nu}(\alpha),\ell_2^{\nu}(\beta))=f(\nu,\gamma^{\nu}).
\end{equation}

%%%%%%%%%%%%%%%
% PROPOSITION %
%%%%%%%%%%%%%%%
\begin{prop}\label{prop:gen}
The generator polynomial of $\uclong(s,t)$ is
$$
g(X):=\frac{\big(x^{q-1}-1\big)^p}
{\prod_{j={q-t}}^{q-1}(x-\gamma^j)^s}. 
$$
\end{prop}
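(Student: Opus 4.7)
The plan is to identify the zeros of $g(x)$ together with their multiplicities via a Chinese Remainder Theorem (CRT) decomposition of the ambient ring. Since $N := mn = p(q-1)$ and $\charac(\Fq) = p$, we have $x^N - 1 = (x^{q-1} - 1)^p = \prod_{j=0}^{q-2}(x - \gamma^j)^p$, so
\[
\Fq[x]/(x^N - 1) \;\cong\; \prod_{j=0}^{q-2} R_j, \qquad R_j := \Fq[\epsilon]/(\epsilon^p),
\]
via $\epsilon \mapsto x - \gamma^j$ in the $j$-th factor. Writing $\pi_j$ for the projection to $R_j$ and $I$ for the code viewed as an ideal of $\Fq[x]/(x^N - 1)$, the multiplicity $\mu_j$ of $\gamma^j$ as a root of $g$ is the unique $\mu_j \in \{0, \ldots, p\}$ with $\pi_j(I) = (\epsilon^{\mu_j})$, so the task reduces to computing $\pi_j(I)$ for every $j$.

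For a codeword $c(x) = \sum_{\nu = 0}^{N-1} f(\nu, \gamma^\nu)\, x^\nu$ arising from $f = \sum_{a, b} f_{a, b}\, x^a y^b \in \Fq^{s,t}[x, y]$, the coefficient of $\epsilon^k$ in $c(\gamma^j + \epsilon) \bmod \epsilon^p$ equals $\gamma^{-jk}\sum_\nu \binom{\nu}{k} \gamma^{j\nu} f(\nu, \gamma^\nu)$. Parameterizing $\nu = i + p\tilde{j}$ with $i \in \{0, \ldots, p-1\}$ and $\tilde{j} \in \{0, \ldots, q-2\}$ (a bijection since $\gcd(p, q-1) = 1$), we have $\nu^a = i^a$ in $\Fq$ and, for $k < p$, $\binom{\nu}{k} = \binom{i}{k}$ by Lucas' theorem. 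The resulting inner sum $\sum_{\tilde{j} = 0}^{q-2} \gamma^{p(j+b)\tilde{j}}$ vanishes unless $(q-1) \mid (j + b)$. Hence for $j \notin S := \{0\} \cup \{q - t, \ldots, q-2\}$ the entire image $\pi_j(I)$ is zero, so $\mu_j = p$; and for $j \in S$, with $b(j)$ the unique $b \in \{0, \ldots, t-1\}$ satisfying $(q-1) \mid (j + b)$, the $\epsilon^k$-coefficient reduces to $\gamma^{-jk}(q-1) \sum_{a=0}^{s-1} f_{a, b(j)}\, \sigma_{k, a}$, where $\sigma_{k, a} := \sum_{i=0}^{p-1} \binom{i}{k} i^a \in \Fp$.

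To determine $\mu_j$ for $j \in S$, I would observe that $\sigma_{k, a}$ is exactly the coefficient of $(y - 1)^k$ in the polynomial $q_a(y) := \sum_{i=0}^{p-1} i^a y^i \in \Fp[y]$, so the $\epsilon$-valuation of the basis element $\sum_k \sigma_{k, a} \epsilon^k$ coincides with the order of vanishing of $q_a$ at $y = 1$. Setting $\theta := y \, d/dy$, the identity $\theta y^i = i y^i$ gives $q_a = \theta^a q_0$, while $q_0(y) = (y^p - 1)/(y - 1) = (y - 1)^{p-1}$ in characteristic $p$. A short induction on $a$ then shows that $q_a$ vanishes at $y = 1$ to order exactly $p - 1 - a$ for $a \in \{0, \ldots, p-1\}$: if $q = (y-1)^r u$ with $u(1) \neq 0$ and $0 < r < p$, then $\theta q = r\, y\, (y-1)^{r-1} u + y (y-1)^r u'$, whose leading $(y-1)^{r-1}$-coefficient $r \cdot u(1)$ is nonzero in $\Fp$. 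Consequently the $s$ basis polynomials (for $a \in \{0, \ldots, s-1\}$) have distinct $\epsilon$-valuations $p-1, p-2, \ldots, p-s$, span an $s$-dimensional $\Fq$-subspace of $(\epsilon^{p-s}) \subseteq R_j$, and by dimension count must equal $(\epsilon^{p-s})$, yielding $\mu_j = p - s$.

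Assembling the two cases,
\[
g(x) = \prod_{j \notin S}(x - \gamma^j)^p \cdot \prod_{j \in S}(x - \gamma^j)^{p-s} = \frac{(x^{q-1} - 1)^p}{\prod_{j \in S}(x - \gamma^j)^s},
\]
and since $\gamma^{q-1} = \gamma^0 = 1$ this matches the claimed denominator $\prod_{j = q - t}^{q-1}(x - \gamma^j)^s$. The main obstacle is the inductive analysis of the operator $\theta$: one must verify carefully that the factor $r$ in the leading coefficient remains nonzero in $\Fp$ throughout the range $0 < r < p$ (which it does, as $r$ never vanishes modulo $p$ in this range). Once this is in hand, the rest of the argument is a direct assembly of the pieces.
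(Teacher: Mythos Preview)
Your proof is correct and complete, but it takes a genuinely different route from the paper's.

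The paper proceeds via duality. It observes that the generator matrix of $\uclong(s,t)$ has rows $\{\nu^i(\gamma^j)^{\nu}\}_{\nu}$ for $i\in\{0,\ldots,s-1\}$, $j\in\{0,\ldots,t-1\}$, and proves a short lemma (replacing Hasse-derivative conditions $\binom{j}{i}\beta^j$ by the monomial conditions $j^i\beta^j$ via an invertible triangular change of basis) showing that this matrix is precisely a parity-check matrix for the cyclic code whose codewords have each of $\gamma^0,\ldots,\gamma^{t-1}$ as a root of multiplicity at least $s$. Hence the \emph{dual} code has generator polynomial $\prod_{j=0}^{t-1}(x-\gamma^j)^s$, and reversing gives the check polynomial $h(x)=\prod_{j=q-t}^{q-1}(x-\gamma^j)^s$ of $\uclong(s,t)$, from which $g=(x^{q-1}-1)^p/h$.

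Your argument is direct rather than dual: you localize via the CRT decomposition $\Fq[x]/(x^N-1)\cong\prod_j \Fq[\epsilon]/(\epsilon^p)$ and compute $\pi_j(I)$ explicitly. The two approaches touch the same combinatorics---the interplay between $\binom{\nu}{k}$ and $\nu^a$---but handle it differently: the paper packages it once as a row-equivalence of matrices, while you unfold it through Lucas' theorem and the elegant operator identity $q_a=\theta^a q_0$ with $q_0=(y-1)^{p-1}$, which pins down the exact $\epsilon$-valuations $p-1-a$. The paper's argument is shorter and leverages duality; yours is more self-contained and yields the multiplicities $\mu_j$ without passing through the dual code at all.
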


The proof relies on the following lemma.
%%%%%%%%%
% LEMMA %
%%%%%%%%%
\begin{lemma}\label{lemma:hmat}
A polynomial $c(x)=c_0+c_1x+\cdots+c_{N-1}x^{N-1}\in \Fq[x]$ has a root
$\beta\in \Fq$ of multiplicity at least $r\leq p$ iff its vector of
coefficients $\bs{c}:=(c_0,c_1,\ldots, c_{N-1})$ satisfies
$H\bs{c}^T=\bs{0}$, where
$$
H:=\{j^i\beta^j\}_{i\in\{0,\ldots,r-1\},
j\in \{0,\ldots, N-1\}}.
$$
\end{lemma}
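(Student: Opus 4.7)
My plan is to reduce the multiplicity condition to a linear system via Hasse derivatives, and then to show that this system differs from $H\bs{c}^T=\bs{0}$ only by an invertible linear change of basis on the rows.

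Recall that the $i$-th Hasse derivative of $c(x)=\sum_j c_j x^j$ is $c^{[i]}(x):=\sum_{j\geq i}\binom{j}{i}c_j x^{j-i}$, and it satisfies the characteristic-free Taylor identity $c(x+\beta)=\sum_i c^{[i]}(\beta)\,x^i$. Hence $\beta$ is a root of $c$ of multiplicity $\geq r$ iff $c(x+\beta)$ is divisible by $x^r$, iff $c^{[i]}(\beta)=0$ for every $0\leq i<r$. Writing out these $r$ conditions gives an equivalent linear system $\widetilde H\,\bs{c}^T=\bs{0}$ with $\widetilde H_{i,j}=\binom{j}{i}\beta^{j-i}$ (taking $\binom{j}{i}=0$ when $j<i$); assuming $\beta\neq 0$ (the case $\beta=0$ being handled directly, with the convention $0^0=1$), I scale row $i$ by the nonzero scalar $\beta^i$ to obtain the row-equivalent matrix $\widetilde H'$ with $\widetilde H'_{i,j}=\binom{j}{i}\beta^j$. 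It then remains to show that $\widetilde H'$ and $H$ have the same row span. For this, consider the $\Fq$-linear map $\Phi$ sending a polynomial $p(X)$ of degree less than $r$ to the vector $(p(j)\beta^j)_{j=0}^{N-1}\in\Fq^N$: row $i$ of $H$ equals $\Phi(X^i)$, while row $i$ of $\widetilde H'$ equals $\Phi\bigl(\binom{X}{i}\bigr)$. Since $\binom{X}{i}$ has leading coefficient $1/i!$, and the hypothesis $r\leq p$ ensures that $i!$ is invertible in $\Fq$ for every $0\leq i\leq r-1$, both $\{X^i\}_{i=0}^{r-1}$ and $\{\binom{X}{i}\}_{i=0}^{r-1}$ are bases of the space of polynomials of degree less than $r$, related by an invertible upper-triangular change of basis. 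By linearity of $\Phi$, the rows of $H$ and of $\widetilde H'$ span the same subspace of $\Fq^N$, so the two linear systems have the same solution set, yielding the iff.

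The main thing to track --- and the only role of the hypothesis $r\leq p$ --- is the invertibility of $0!,1!,\ldots,(r-1)!$ in $\Fq$, which is precisely what makes the transition between the ``$j^i$-rows'' of $H$ and the ``$\binom{j}{i}$-rows'' arising naturally from the Hasse derivatives a legitimate change of basis. Using Hasse rather than ordinary derivatives is also essential, since in characteristic $p$ ordinary derivatives cannot detect multiplicities exceeding $p$.
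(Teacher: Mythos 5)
Your proof is correct and follows essentially the same route as the paper's: reduce root multiplicity to the vanishing of the Hasse derivatives of orders $0,\ldots,r-1$, then pass from the binomial-coefficient rows to the power rows via the triangular change of basis $\binom{X}{i}=(i!)^{-1}X(X-1)\cdots(X-i+1)$, using $r\leq p$ to invert the factorials. The only difference is that you explicitly carry out the row scaling by $\beta^i$ and treat the $\beta=0$ case, details the paper leaves implicit.
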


\begin{proof}
By the definition of the Hasse derivative (e.g., in \cite{Dvir}),
$\beta$ is a root of multiplicity at least $r$ of $c(x)$ iff the Hasse
derivatives of order $0,\ldots ,r-1$ of $c(x)$ vanish at $\beta$. As
observed in \cite{Castagnoli}, this means that having $\beta$ as
a root of multiplicity $r$ is equivalent to $H_0\cdot
\bs{c}^T=\bs{0}$, where 
$$
H_0:=\left\{\binom{j}{i}\beta^j\right\}_{i\in\{0,\ldots,r-1\},
j\in \{0,\ldots, N-1\}}.
$$
Note that the row index $i$ satisfies $i\leq r-1\leq p-1$. Now, for
$i<p$, $i!$ is invertible in $\Fp\subseteq \Fq$, and  
$$
\binom{j}{i} = (i!)^{-1}\cdot j(j-1)\cdots(j-i+1)
$$
(note that this holds also for $j<i$, where $\binom{j}{i}=0$). 
In $\Fp[x]$, write $a_0+a_1x+\cdots+a_ix^i:=(i!)^{-1}\cdot
x(x-1)\cdots(x-i+1)$, 
where $a_i=(i!)^{-1}\neq 0$,
and set $\bs{a}:=(a_0,\ldots, a_i)$.
Then $\binom{j}{i}$ is the $\bs{a}$-linear combination of
$1,j,\ldots,j^i$ (same $\bs{a}$ for all $j$), and therefore row $i$
of $H_0$ is the $\bs{a}$-linear combination of rows $0,\ldots,i$ of
$H$. Hence, $H_0$ is obtained by multiplying $H$ from the left by an
invertible lower triangular matrix, and it follows that both matrices
have the same row space.   
\end{proof}

With the lemma, the proof of Proposition \ref{prop:gen} is now
straightforward: 
\begin{proof}[Proof of Proposition \ref{prop:gen}]
A generator matrix of $\uclong(s,t)$ can be obtained by evaluating the
monomials $x^iy^j$, $i\in\{0,\ldots,s-1\},j\in\{0,\ldots,t-1\}$, as
defined in (\ref{eq:eval}), for $\nu\in\{0,\ldots,N-1\}$. The
resulting matrix in $\Fq^{st\times N}$ has rows
$$
\left\{\nu^i(\gamma^{j})^{\nu}\right\}_{\nu\in\{0,\ldots,N-1\}}
$$
for all $i\in\{0,\ldots,s-1\},j\in\{0,\ldots,t-1\}.$
Noting that $s\leq p$, it follows from Lemma \ref{lemma:hmat} that the
generator polynomial of the dual code is 
$g^{\perp}(x):=\prod_{j=0}^{t-1} (x-\gamma^j)^s$, so that the check
polynomial of $\uclong(s,t)$ 
is 
$$
h(x):=\prod_{j=0}^{t-1} (x-\gamma^{-j})^s =
\prod_{j={q-t}}^{q-1}(x-\gamma^j)^s. 
$$
\end{proof}

%\bibliographystyle{alpha}
%	\bibliography{bibliography}

%%%%%%%
% END %
%%%%%%%
\end{document}